\newcommand{\appsec}{
\renewcommand{\thesubsection}{\Alph{subsection}}
}
\def\R{\mathbb{R}}
\def\cE{\mathbb{E}}
\def\cl{\mathop{\rm cl}}
\def\e{\mathop{\rm e}}
\def\N{\mathop{\rm N}}
\def\cdet{\mathop{\rm det}}
\def\B{{\mathcal B}}
\def\D{{\mathcal D}}
\def\H{{\mathcal H}}
\def\G{{\mathcal G}}
\def\C{{\mathcal C}}
\def\P{{\mathcal P}}
\def\M{{\mathcal M}}
\def\S{{\mathcal S}}
\def\L{{\mathcal L}}
\def\cR{{\mathcal R}}
\def\sPr{{\mathsf{Pr}}}
\def\sUnif{{\mathsf{Unif}}}
\def\sBer{{\mathsf{Bernoulli}}}
\def\sX{{\mathsf X}}
\def\sY{{\mathsf Y}}
\def\sU{{\mathsf U}}
\def\sV{{\mathsf V}}
\def\sW{{\mathsf W}}
\theoremstyle{remark}
\newtheorem{example}{Example}
\newtheorem{definition}{Definition}
\newtheorem{theorem}{Theorem}
\newtheorem{corollary}{Corollary}
\newtheorem{proposition}{Proposition}
\newtheorem{lemma}{Lemma}
\theoremstyle{remark}
\newtheorem{remark}{Remark}
\begin{document}
\sloppy
\title{Output Constrained Lossy Source Coding with Limited Common Randomness
\thanks{The authors are with the Department of Mathematics and Statistics,
     Queen's University, Kingston, ON, Canada,
     Email: \{nsaldi,linder,yuksel\}@mast.queensu.ca}
\thanks{This research was supported in part by
  the Natural Sciences and
  Engineering Research Council (NSERC) of Canada.}
\thanks{The material in this paper was presented in part at the
52nd Annual Allerton Conference on Communication, Control and Computing,
Monticello, Illinois, Oct.\ 2014.}
}
\author{Naci Saldi, Tam\'{a}s Linder, Serdar Y\"uksel}
\maketitle
\begin{abstract}
  This paper studies a Shannon-theoretic version of the generalized distribution
  preserving quantization problem where a stationary and memoryless source is
  encoded subject to a distortion constraint and the additional requirement that
  the reproduction also be stationary and memoryless with a given
  distribution. The encoder and decoder are stochastic and assumed to have
  access to independent common randomness.  Recent work has characterized the
  minimum achievable coding rate at a given distortion level when unlimited
  common randomness is available. Here we consider the general case where the
  available common randomness may be rate limited. Our main result completely
  characterizes the set of achievable coding and common randomness rate pairs at
  any distortion level, thereby providing the optimal tradeoff between these two
  rate quantities. We also consider two variations of this problem where we
  investigate the effect of relaxing the strict output distribution constraint
  and the role of `private randomness' used by the decoder on the rate
  region. Our results have strong connections with Cuff's recent work on
  distributed channel synthesis.  In particular, our achievability proof
  combines a coupling argument with the approach developed by Cuff, where
  instead of explicitly constructing the encoder-decoder pair, a joint
  distribution is constructed from which a desired encoder-decoder pair is
  established. We show however that for our problem, the separated solution of
  first finding an optimal channel and then synthesizing this channel results in
  a suboptimal rate region.

\end{abstract}

\begin{IEEEkeywords}
Lossy source coding,  rate distortion, randomization, shared randomness, channel synthesis.
\end{IEEEkeywords}

\section{Introduction}
\label{sec1}

In this paper, we aim to characterize the achievable
rate distortion region for the generalized distribution preserving randomized source coding problem,
where the rate region measures both the coding rate and the rate of common
randomness shared between the encoder and the decoder. To give a more precise definition of the problem, consider the communication system in Fig.~\ref{fig1}.

\begin{figure}[h]
\centering
\tikzstyle{int}=[draw, fill=white!20, minimum size=3em]
\tikzstyle{init} = [pin edge={to-,thin,black}]
\begin{tikzpicture}[node distance=4cm,auto,>=latex']
    \node [int] (a) {Encoder};
    \node (b) [left of=a,node distance=2cm, coordinate] {a};
    \node [int] (c) [right of=a] {Decoder};
    \node [coordinate] (end) [right of=c, node distance=2cm]{};
    \draw[<->, >=latex', shorten >=2pt, shorten <=2pt, bend left=45, thick, dashed]
    (a.north) to node[auto] {Rate $R_c$}(c.north);
    \path[->] (b) edge node {$X^n$} (a);
    \path[-]  (a) edge node {Rate $R$} (c);
    \path[->] (c) edge node {$Y^n$} (end) ;
\end{tikzpicture}
\caption{Randomized source coding with limited common randomness.}
\label{fig1}
\end{figure}
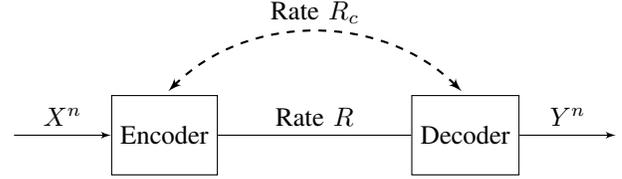

The source block $X^n=(X_1,\ldots,X_n)$ consists of $n$ independent drawings of a random variable
$X$ which takes values in a set $\sX$ and has distribution $\mu$.
The stochastic encoder takes the source and the common randomness, which is
available at rate $R_c$ bits per source symbol, as its inputs and produces an output
at a rate $R$ bits per source symbol. Observing the output of the encoder and
the common randomness, the decoder (stochastically) generates the output
(reconstruction) which takes values from a reproduction alphabet $\sY$. Here
$\sX=\sY$ is either a finite set or the real line.  The
common randomness is assumed to be independent of the source.  As usual, the
fidelity of the reconstruction is characterized by the expected distortion
\begin{align}
  \cE \biggl[\frac{1}{n} \sum_{i=1}^{n} \rho(X_i,Y_i)\biggr], \nonumber
\end{align}
where $\rho:\sX\times\sY\rightarrow [0,\infty)$ is a distortion
measure. However, unlike in the standard rate distortion problem, we require
that the output $Y^n=(Y_1,\ldots,Y_n)$ be a sequence of independent and
identically distributed (i.i.d.) random variables \emph{with a given common
distribution~$\psi$}.

For $D\ge 0$, a rate pair $(R,R_c)$ is said to be \emph{achievable} at distortion
level $D$ if, for any $\varepsilon>0$ and all $n$ large enough, there exists a
system as in Fig.~\ref{fig1} with coding rate $R$ and common randomness rate
$R_c$, such that the distortion of the system is less than $D+\varepsilon$ and
the output distribution constraint for $Y^n$ holds.  The main problem considered
in this paper is finding the set of all achievable rate pairs, denoted by
$\cR(D)$.

The communication system depicted in Fig.~\ref{fig1} is a generalized version
of a \emph{randomized quantizer (source code)} where the encoder and decoder is usually
assumed to have access to unlimited common randomization.
Randomized (dithered) uniform quantizers were originally introduced in signal processing by Roberts
\cite{Rob62}, where he observed that adding random noise to an image signal before
uniform quantization and subtracting the noise before reconstruction may result in perceptually more
pleasing images. Versions of dithered uniform quantizers were analyzed by Schuchman
\cite{Sch64} and Gray and Stockham \cite{GrSt93}. Under certain conditions,
dithering results in uniformly distributed quantization noise that is independent of the input
\cite{Sch64}, \cite{GrSt93}, which allows modeling the quantization process by an additive noise channel.
Related entropy-coded
dithered scalar and lattice quantizers have been extensively used in the
information theoretic literature to construct robust lossy compression schemes
with universal performance guarantees \cite{Ziv85,ZaFe92,ZaFe96b,Zam14}. Akyol and Rose
\cite{AkRo12-2}, \cite{AkRo13},  introduced a class of randomized \emph{nonuniform}
scalar quantizers obtained via applying companding to a dithered
uniform quantizer. Recently Li \emph{et al$.$}
\cite{LiKlKl10,LiKlKl11}  and Klejsa
\emph{et al$.$} \cite{KlZhLiKl13}  introduced and studied more
general classes of randomized quantizers that are
\emph{distribution-preserving}, i.e., the quantizer output is restricted to have
the same distribution as the source. The distribution-preserving property of
these quantizers is reported to significantly improve the perceptual quality of
the reconstruction in audio and video coding. Note that if in Fig.~\ref{fig1} we
set the distribution $\mu$ of the $X_i$ to be equal to the distribution $\psi$
of the $Y_i$, we obtain a distribution-preserving quantizer.

In our recent work \cite{SaLiYu13b,SaLiYu13} we studied a generalized version of
distribution-preserving randomized quantization where the output is constrained
to have a given distribution which may be different from the source
distribution. The main focus there was to develop an abstract and completely
general representation of finite-dimensional randomized quantization and to
study the existence and structural properties of optimal generalized
distribution preserving quantizers. Moreover, \cite{SaLiYu13} also
considered the asymptotic performance in the limit of infinite block length. In
particular, a rate distortion theorem was obtained for
stationary and memoryless sources under the assumption that the output must also
be a stationary and memoryless process and common randomness (in the form of a
random variable uniformly distributed on the unit interval $[0,1]$) is
shared by the encoder and the decoder. This situation corresponds to formally
setting
$R_c=\infty$ in Fig.~\ref{fig1}. In particular, \cite[Theorem 7]{SaLiYu13} showed
for both finite and continuous source and reproduction alphabets that the set of
achievable coding rates for unlimited common randomness $R_c=\infty$, denoted by $\cR(D,\infty)$, is
\begin{align}
\cR(D,\infty) = \{R \in \R: R \geq I(X;Y), \, P_{X,Y} \in \G(D)\},
\nonumber
\end{align}
where $\G(D)$ is the set of probability distributions $P_{X,Y}$ of $\sX \times
\sY$-valued random variables $(X,Y)$  defined as
\[
\G(D)\coloneqq \{P_{X,Y}: P_X=\mu, P_Y=\psi,\cE
[\rho(X,Y)] \leq D\}.
\]
Thus the minimum coding rate at distortion $D$ is
the so-called ``minimum mutual information with constrained output $\psi$''
\cite{ZaRo01}  given by
\begin{equation}
\label{eq_lower_mutual}
I(\mu\|\psi,D) \coloneqq \min\{ I(X,Y): P_{X,Y}\in \G(D)\}.
\end{equation}
If $\G(D)$ is empty, we let $I(\mu\|\psi,D) = \infty$.

In this paper, we generalize the above rate distortion result by studying the
optimal tradeoff between the coding rate $R$ and common randomness rate $R_c$
for the system in Fig.~\ref{fig1}. In particular, we find a single-letter
characterization of the entire achievable rate region $\cR(D)$ of pairs
$(R,R_c)$. Apart from the theoretical appeal of obtaining a computable
characterization of the rate region via information theoretic quantities, this
investigation is also motivated by the fact that the common randomness rate
$R_c$ has a direct affect on the complexity of the system since each possible
value of the common randomization picks a different (stochastic) encoder and
decoder pair from a finite set whose size is proportional to $2^{nR_c}$.
We also consider two variations of the problem, in which we
investigate the effect of relaxing the strict output distribution
constraint and the role of private randomness used by the decoder on the rate region.
For both of these problems, we give the complete characterizations of the achievable rate pairs.

It is important to point out that the block diagram in Fig.~\ref{fig1} depicting
the generalized distribution preserving quantization problem has the same
structure as the system studied by Cuff \cite{Cuf08,Cuf13} to synthesize
memoryless channels up to vanishing total variation error.  Although many other
problems in information theory share a similar representation, the connection
with Cuff's work is more than formal.  The distortion and output distribution
constraints in our problem replaces the requirement in \cite{Cuf13} that the
joint distribution of the input $X^n$ and output $Y^n$ should arbitrarily well
approximate (in total variation) the joint distribution obtained by feeding the
input $X^n$ to a given memoryless channel. Using the main result
\cite[Theorem~II.1]{Cuf13} one can obtain  an inner
bound, albeit a loose one, for our problem. A good part of our proof consists of
tailoring Cuff's arguments in \cite{Cuf13} to our setup to obtain a tight
achievable rate region. Because of this, we will be adopting many of the
notations used in \cite{Cuf13}. We also note that unlike in the distributed
channel synthesis problem in \cite{Cuf13}, our results also allow for continuous
source and reproduction alphabets.

The rest of the paper is organized as follows. In Section~\ref{sec2} we
formalize the problem and present the main result giving the rate region
$\cR(D)$. Section~\ref{sec_con} discusses connections with Cuff's work on distributed channel synthesis. In Section~\ref{sec_spec} we investigate the extreme points of the
rate region at $R_c=0$ and $R_c=\infty$. In Section~\ref{sec exs} we present computable inner bounds for
double symmetric binary source and reproduction distributions under
the Hamming distortion, and for Gaussian source and reproduction distributions under the squared error distortion. In Section~\ref{sec3} two variations of the original problem are formulated and the associated achievable rate regions are described. The proof of the main result is given in Section~\ref{sec_proof}.

\subsection{Notation and Assumptions}
\label{sub1sec1}

In this paper, $\sX$ denotes the input alphabet and $\sY$ is the reconstruction
(output) alphabet such that $\sX=\sY$ is a finite set or $\sX=\sY=\R$. We assume
a distortion measure $\rho(x,y)= d(x,y)^p$, where $d$ is the metric on
$\sX$. Here, $p>0$ when $\sX$ is finite and $p=2$ when $\sX=\R$, in which case
we also assume that $d(x,y)=|x-y|$ (so that $\rho$ is the squared error) and
that the source distribution $\mu$ and the desired output distribution $\psi$
have finite second moments.  We note that we impose these
  restrictions on the distortion measure because in a key step of the
  achievability proof we need to invoke the triangle inequality.   For the
finite alphabet case, we let $\rho_{\max} \coloneqq \max_{x,y} \rho(x,y)$. For
any positive real number $R$, we define $[2^{nR}] \coloneqq \{1,\ldots,\lceil
2^{nR}\rceil\}$, where $\lceil 2^{nR}\rceil$ is the smallest integer greater
than or equal to $2^{nR}$.  $\sV^n$ will denote the $n$-fold Cartesian product
of a set $\sV$, the elements of which are $v^n=(v_1,\ldots,v_n)$, $v_i\in \sV$,
$i=1,\ldots,n$. A similar convention also applies to a sequence of random
variables which will be denoted by upper case letters. For any triple $(X,Y,U)$
of random variables or vectors, the notation $X-U-Y$ means that they form a
Markov chain in this order. For any random vector $U^n$, the random measure
$p_{U^n}$ denotes the empirical distribution of $U^n$.
The notation $V\sim \nu$ means that random variable $V$ has
distribution $\nu$. For any probability distribution $\nu$ on $\sV$, $\nu^n$
denotes the $n$-fold product distribution
$\underbrace{\nu\times\cdots\times\nu}_{\text{$n$-times}}$ on $\sV^n$.

\section{Problem Statement and Main Result}
\label{sec2}

Let $\{X_n\}_{n\geq1}$ be a
stationary and memoryless source (sequence of i.i.d.\ random variables) with
common distribution $\mu$ on source alphabet $\sX$, and let $K$ be a random variable
uniformly distributed over $[2^{nR_c}]$ which is independent of $X^n$. Here $K$
represents the common randomness that is shared between the encoder and the
decoder.

For a positive integer $n$ and nonnegative numbers $R$ and $R_c$, a $(n,R,R_c)$
\emph{randomized source code} is defined by an encoder $E=E_{J|X^n,K}$ and the
decoder $F_{Y^n|J,K}$, where $E$ is a regular conditional probability (see
\cite{Dud89}) on $[2^{nR}]$ given $\sX^n\times[2^{nR_c}]$ and $F$ is a regular
conditional probability on $\sY^n$ given $[2^{nR}]\times[2^{nR_c}]$. Hence,
letting $J$ and $Y^n$ be the output of the encoder and the decoder,
respectively, the joint distribution of $(K,X^n,J,Y^n)$ is given, in a somewhat
informal notation, by
\begin{align}
(K,X^n,J,Y^n) \sim    F_{Y^n|J,K} E_{J|X^n,K} P_K P_{X^n} \label{eq5}.
\end{align}
The distortion of the code is $ \cE[\rho_n(X^n,Y^n)] $, where
$\rho_n(x^n,y^n) \coloneqq \frac{1}{n}\sum_{i=1}^{n} \rho(x_i,y_i)$.

\begin{definition}
\label{def1}
For any nonnegative real number $D$ and desired output distribution $\psi$, the
pair $(R,R_c)$ is said to be \emph{$\psi$-achievable} if, for any
$\varepsilon>0$ and all sufficiently large $n$, there exists a randomized
$(n,R,R_c)$ source code such that
\begin{align}
\cE[\rho_n(X^n,Y^n)] &\leq D + \varepsilon \nonumber \\
Y^n &\sim \psi^n. \nonumber
\end{align}
\end{definition}
In the rest of this paper $\psi$ will be kept fixed, so we drop referring to
$\psi$ and simply write that $(R,R_c)$ is \emph{achievable}. For $D\ge 0$ we let
$\cR(D)$ denote the set of all achievable $(R,R_c)$ pairs.  The following
theorem, which is the main result in this paper, characterizes the closure of
this region in terms of an auxiliary random variable $U$ on alphabet $\sU$.

\begin{theorem}
\label{thm2}
For any $D\geq0$ the closure  $\cl\cR(D)$  of  $\cR(D)$ is given by
\begin{align}
&\hspace{-10pt}\cl\cR(D) = \L(D)  \nonumber \\
&\phantom{xx}\coloneqq \left\{\begin{array}{rcl}
(R,R_c) \in \R^2 &: &\exists P_{X,Y,U}\in\M(D) \text{ s.t. } \\
 R &\geq &I(X;U),  \\
 R+R_c &\geq &I(Y;U)
\end{array}\right\} \label{neq10},
\end{align}
where, for $\sX=\sY$ finite,
\begin{align}
\M(D) &\coloneqq \left\{\begin{array}{cl}
P_{X,Y,U} :& P_{X} = \mu, P_{Y} = \psi, \\
 &\cE[\rho(X,Y)]\leq D,\, X-U-Y, \\
 &|\sU|\leq |\sX|+|\sY|+1
\end{array}\right\}. \label{neq11}
\end{align}
When $\sX=\sY=\R$, the cardinality bound for $\sU$
in \eqref{neq11} is replaced by $\sU=\R$.
\end{theorem}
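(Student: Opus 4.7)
The plan is to establish $\cl\cR(D)=\L(D)$ via the two inclusions $\cl\cR(D)\subseteq\L(D)$ (converse) and $\L(D)\subseteq\cl\cR(D)$ (achievability); since $\L(D)$ is closed this characterizes the closure directly. The converse follows by standard single-letterization, while the achievability adapts Cuff's soft-covering construction and then performs a coupling step that promotes an approximate output law to the exact one required by Definition~\ref{def1}.

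For the converse, I would start from any $(n,R,R_c)$ code with $Y^n\sim\psi^n$ and $\cE[\rho_n(X^n,Y^n)]\le D+\eps$, and use the independence of $K$ from $X^n$ together with $H(J)\le nR$ to obtain
\[
nR \ge I(X^n;J\mid K)=I(X^n;J,K), \qquad n(R+R_c) \ge H(J,K) \ge I(Y^n;J,K).
\]
I would then single-letterize with an auxiliary variable of the form $U_i:=(J,K,Y^{i-1})$ (a standard Cuff-style choice) and a uniform time-sharing index $Q\sim\sUnif\{1,\ldots,n\}$ absorbed into $U:=(U_Q,Q)$, producing a joint law $P_{X,Y,U}$ with the required marginals and satisfying $I(X;U)\le R$, $I(Y;U)\le R+R_c$, and $\cE[\rho(X,Y)]\le D+\eps$. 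The Markov chain $X-U-Y$ would be verified using the fact that the decoder depends on $X^n$ only through $(J,K)$; letting $\eps\downarrow 0$ and invoking a Fenchel--Eggleston--Carath\'eodory argument supplies the cardinality bound $|\sU|\le|\sX|+|\sY|+1$ in the finite-alphabet case.

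For the achievability, fix $P_{X,Y,U}\in\M(D)$ with $R>I(X;U)$ and $R+R_c>I(Y;U)$, and generate a random codebook $\{U^n(j,k):(j,k)\in[2^{nR}]\times[2^{nR_c}]\}$ i.i.d.\ from $P_U^n$. The encoder, observing $(X^n,K)$, selects $J$ so that $(X^n,U^n(J,K))$ is jointly typical, which succeeds with high probability by the covering lemma since $R>I(X;U)$; the decoder then pushes $U^n(J,K)$ through the memoryless test channel $P_{Y\mid U}$ to obtain a candidate $\tilde Y^n$. By Cuff's soft-covering lemma applied to the codebook indexed by $(J,K)\in[2^{n(R+R_c)}]$, the condition $R+R_c>I(Y;U)$ forces $P_{X^n,\tilde Y^n}$ to be $o(1)$-close in total variation to $P_{X,Y}^n$; in particular the marginal of $\tilde Y^n$ approaches $\psi^n$ and $\cE[\rho_n(X^n,\tilde Y^n)]\to\cE[\rho(X,Y)]\le D$.

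The main obstacle, and the source of the coupling argument alluded to in the introduction, is converting this approximate output law into the exact constraint $Y^n\sim\psi^n$. I would invoke a maximal coupling: since $\|P_{\tilde Y^n}-\psi^n\|_{TV}=o(1)$, there is a joint law of $(\tilde Y^n,Y^n)$ with $Y^n\sim\psi^n$ and $\sPr(\tilde Y^n\ne Y^n)=o(1)$, and realizing this coupling as decoder-side stochasticity (or via asymptotically negligible extra common randomness) produces a code with the required exact output. The additional distortion incurred on the event $\{\tilde Y^n\ne Y^n\}$ is controlled using the form $\rho=d^p$ via the triangle inequality, with $\rho_{\max}<\infty$ sufficing in the finite case and the finite second-moment hypotheses on $\mu$ and $\psi$ handling the tails when $\sX=\sY=\R$. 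For the continuous case I would combine this with a discretization of $\mu$ and $\psi$ on increasing compact sets, apply the finite-alphabet achievability inside, and pass to the limit; then $\sU=\R$ replaces the cardinality bound.
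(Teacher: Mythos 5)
Your converse is essentially the paper's argument. The auxiliary variable you propose, $U=(J,K,Y^{Q-1},Q)$, differs slightly from the paper's choice $U=(J,K,Q)$, but both are valid: the Markov chain $X^n-(J,K)-Y^n$ and the independence/i.i.d.\ structure of $X^n$ and $Y^n$ make the extra conditioning on $Y^{Q-1}$ harmless, and the same single-letter bounds come out. The treatment of $\eps\downarrow 0$ is glossed over — the paper is more careful here, introducing $I_{R_c}(\mu\|\psi,D)$ and proving its continuity in $D$ so that the $\varepsilon$ slack in Definition~\ref{def1} can be absorbed — but the idea is the same. The maximal-coupling patch at the end is fine in the finite case (disagreement probability $o(1)$ times $\rho_{\max}$), and your discretization plan for $\sX=\sY=\R$ matches the paper's.

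The achievability, however, has a genuine gap: you have the encoder choose $J$ by \emph{joint typicality} with $X^n$, and then you invoke Cuff's soft-covering lemma ``applied to the codebook indexed by $(J,K)$'' to conclude that $P_{\tilde Y^n}$ is close to $\psi^n$. The soft-covering lemma requires the index into the codebook to be \emph{uniform} (and independent of the codebook realization); it controls the marginal of $W^n$ when a codeword is drawn uniformly at random and passed through $P_{W|V}^n$. A joint-typicality encoder does not produce a uniform index — it concentrates $J$ on the (typically few) codewords that are typical with $X^n$ — so you cannot apply the lemma to your construction as stated, and you have no control on $\|P_{\tilde Y^n}-\psi^n\|_{TV}$. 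The paper sidesteps this by using Cuff's \emph{likelihood encoder}: it first defines a virtual joint distribution $\mathbf{\Gamma}$ in which $(J,K)$ is genuinely uniform and $(X^n,Y^n)\sim P_{X,Y|U}^n(\cdot\mid U^n(J,K))$, applies the soft-covering lemma twice under that uniform indexing (once at rate $R>I(X;U)$ conditionally on $K$, once at sum rate $R+R_c>I(Y;U)$) to get $\mathbf{\Gamma}_{X^n,K}\approx \mu^n\times\sUnif$ and $\mathbf{\Gamma}_{Y^n}\approx\psi^n$, and then takes the \emph{actual} encoder to be $\mathbf{\Gamma}_{J|X^n,K}$; the TV bound on $\mathbf{\Gamma}_{X^n,K}$ then transfers to the actual operational distribution $\mathbf{P}$ through \cite[Lemma~V.2]{Cuf13}. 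This transfer step is exactly what is missing if the encoder is a typicality encoder, and it is the core of the achievability argument. A secondary point: the paper's exact-output patch uses the Wasserstein (optimal transport) coupling rather than the maximal coupling, precisely because with the unbounded quadratic cost one needs a bound on the transport cost, not merely on the disagreement probability; your plan to discretize the continuous case and work over finite alphabets avoids that issue, so it is not itself a gap — but the maximal coupling would not carry over to $\R$ without an additional truncation argument.
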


\subsection{Connections with Distributed Channel Synthesis}
\label{sec_con}

As mentioned before, Cuff's work on distributed channel synthesis \cite{Cuf13} is
intrinsically related to our problem. The main objective of \cite{Cuf13} is to
simulate a memoryless channel by a system as in Fig.~\ref{fig1}. To be more
precise, let $Q(y|x)$ denote a given  discrete memoryless channel with
input alphabet $\sX$ and output alphabet $\sY$ to  be simulated
(synthesized) for input $X$ having distribution $\mu$. Let $\pi=\mu Q$ be the
joint distribution of the resulting input-output pair $(X,Y)$.

\begin{definition}[\cite{Cuf13}]
\label{def4}
The pair $(R,R_c)$ is said to be \emph{achievable} for synthesizing a memoryless
channel $Q$ with input distribution $\mu$ if there exists a sequence of $(n,R,R_c)$
randomized source codes  such that
\begin{align}
\lim_{n\rightarrow\infty} \|P_{X^n,Y^n} - \pi^n\|_{TV} = 0, \label{eq_tvconv}
\end{align}
where $X^n \sim \mu^n$ is the memoryless source, $Y^n$ is the output of the
decoder, $\pi^n$ is the $n$-fold product of $\pi=\mu Q= P_X Q$, and $\|\,\cdot\,\|_{TV}$ is the total variation distance for
probability measures: $\|\gamma-\nu\|_{TV} \coloneqq \frac{1}{2} \sum_{v} |\gamma(v) - \nu(v)|$.
\end{definition}

\begin{theorem}{\cite[Theorem II.1]{Cuf13}}
\label{thm5}
The  closure $\C$  of the
set of all achievable $(R,R_c)$ pairs is given by
\begin{align}
\C= \S \coloneqq& \left\{ \begin{array}{rcl} (R,R_c)\in\R^2 &:& \exists P_{X,Y,U}\in\D \text{ s.t.} \\
 R&\geq &I(X;U), \\
 R+R_c&\geq &I(X,Y;U) \end{array} \right\} \label{eq30},
\end{align}
where
\begin{align}
  \D \coloneqq& \{P_{X,Y,U}: P_{X,Y} = \pi, X-U-Y, |\sU|\leq
  |\sX\|\sY|+1\}. \nonumber
\end{align}
Moreover, the total variation error goes to zero exponentially fast with respect to $n$ in the interior of $\mathcal{C}$.
\end{theorem}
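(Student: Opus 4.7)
For achievability I would adapt the soft-covering / channel-resolvability approach to the two-layer (common-randomness) codebook. Fix any $P_{X,Y,U}\in\D$ and rates with $R>I(X;U)$ and $R+R_c>I(X,Y;U)$. Draw iid codewords $U^n(j,k)\sim P_U^n$ indexed by $(j,k)\in[2^{nR}]\times[2^{nR_c}]$, let the encoder use the likelihood rule $\sPr(J=j\mid X^n,K=k)\propto P_{X\mid U}^n(X^n\mid U^n(j,k))$, and let the decoder output $Y^n\sim P_{Y\mid U}^n(\cdot\mid U^n(J,K))$. Cuff's soft-covering lemma is then invoked twice. Viewing the full codebook of size $2^{n(R+R_c)}$ through the memoryless channel $P_{X,Y\mid U}$, the induced joint of $(X^n,Y^n)$ concentrates in total variation on $\pi^n$ at an exponential rate because $R+R_c>I(X,Y;U)$. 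Viewing each $K$-indexed sub-codebook of size $2^{nR}$ through $P_{X\mid U}$, the induced marginal of $X^n$ is exponentially close to $\mu^n$ because $R>I(X;U)$; this keeps the likelihood encoder statistically consistent with the genuine source marginal. A random-coding/expurgation argument then extracts a single deterministic codebook meeting both concentrations simultaneously, establishing the exponential TV decay claim in the interior of $\S$.

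For the converse, take a sequence of $(n,R,R_c)$ codes with $\|P_{X^n,Y^n}-\pi^n\|_{TV}\to 0$. Since $Y^n$ is a stochastic function of $(J,K)$ only, the chain $X^n-(J,K)-Y^n$ holds, implying for each $i$ both $Y^{i-1}\perp X_i\mid(J,K,X^{i-1})$ and $Y_i\perp X_i\mid(J,K,X^{i-1},Y^{i-1})$. Set $U_i:=(J,K,X^{i-1},Y^{i-1})$; the second relation gives the single-letter Markov chain $X_i-U_i-Y_i$. With an independent time-sharing index $T\sim\sUnif\{1,\ldots,n\}$, take $U:=(T,U_T)$, $X:=X_T$, $Y:=Y_T$; the Markov chain survives and $P_{X,Y}=\frac{1}{n}\sum_i P_{X_i,Y_i}$ lies within $\|P_{X^n,Y^n}-\pi^n\|_{TV}$ of $\pi$. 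For the coding rate, the first conditional independence together with $X_t\perp(K,X^{t-1})$ gives $I(X_t;U_t)=I(X_t;J,K\mid X^{t-1})$, so the chain rule yields $nI(X;U)\le\sum_t I(X_t;J,K\mid X^{t-1})=I(X^n;J\mid K)\le H(J)\le nR+1$. For the sum rate, I decompose
\begin{align}
\sum_t I(X_t,Y_t;U_t)=I(X^n,Y^n;J,K)+\sum_t I(X_t,Y_t;X^{t-1},Y^{t-1}),\nonumber
\end{align}
bounding the first piece by $H(J,K)\le n(R+R_c)+2$ and noting the second equals $\sum_t H(X_t,Y_t)-H(X^n,Y^n)=o(n)$ by continuity of entropy on the finite alphabet under TV convergence to $\pi^n$. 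Passing to a convergent subsequence on the compact space of joint laws, after Fenchel-Eggleston-Carath\'eodory reduces $|\sU|\le|\sX\|\sY|+1$ while preserving $P_{X,Y}=\pi$ and the two mutual-information functionals, produces a limit law in $\D$ realizing the two inequalities.

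The main technical obstacle is ensuring that the two applications of Cuff's soft-covering lemma hold jointly with high probability over the same random codebook, so that the likelihood encoder is simultaneously compatible with the source marginal $\mu^n$ and with the target joint $\pi^n$; this requires tracking two exponential rates governed by the gaps $R+R_c-I(X,Y;U)$ and $R-I(X;U)$ and then a union bound/expurgation to realize a single deterministic codebook. The remaining ingredients---Markov-chain bookkeeping in the converse, continuity of entropy to absorb the approximate (rather than exact) iid structure induced by TV convergence, and the compactness/Carath\'eodory step for the cardinality bound---are standard but must be executed carefully to deliver the exact region $\S$.
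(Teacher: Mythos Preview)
The paper does not prove Theorem~\ref{thm5}; it is quoted verbatim from Cuff \cite[Theorem~II.1]{Cuf13} as an imported result, so there is no proof in the paper to compare your proposal against. Your outline is a faithful sketch of Cuff's own argument---the two-layer random codebook with the likelihood encoder, two invocations of the soft-covering lemma (one at rate $R+R_c>I(X,Y;U)$ for the joint, one at rate $R>I(X;U)$ for the $X$-marginal conditioned on each $k$), and the single-letterization converse with $U_i=(J,K,X^{i-1},Y^{i-1})$ plus an entropy-continuity correction for the approximate i.i.d.\ structure. This is also the machinery the present paper adapts in Section~\ref{sec_proof} to prove its own Theorem~\ref{thm2}, with the only substantive change being that the sum-rate soft-covering is applied through the channel $P_{Y|U}$ (targeting the constraint $R+R_c>I(Y;U)$) rather than through $P_{X,Y|U}$; that is precisely what distinguishes the region $\L(D)$ from the looser inner bound $\S(D)$ of Corollary~\ref{cor1}.
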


This result can be used to obtain an achievable rate region (inner bound) for
our problem as follows: Let $\pi= P_{X,Y}$ be such that $P_X=\mu$, $P_Y=\psi$,
and $\cE[\rho(X,Y]\le D$. Applying Theorem~\ref{thm5} with this input
distribution and the channel induced by $P_{X,Y}$, consider an achievable rate
pair $(R,R_c)$ in \eqref{eq30}. Using basic results from optimal transport
theory  \cite{Vil09}  one can show that \eqref{eq_tvconv} and the fact that $\cE[\rho(X,Y)]\leq D$ imply the existence of a
sequence of channels, to be used at the decoder side, that when fed with $Y^n$,
produces output $\hat{Y}^n$  which has the exact distribution $\psi^n$ and which
additionally
satisfies
\[
 \limsup_{n\to \infty} \cE[\rho_n(X^n,\hat{Y}^n)]\le D.
\]
Augmenting the channel synthesis code with these channels at the decoder side
thus produces a sequence of valid codes for our problem, implying   that
the rate pair $(R,R_c)$ is achievable by our Definition~\ref{def1}.

Using the above argument, one can easily show that Cuff's result directly
implies (without resorting to Theorem~\ref{thm2}) the following
inner bound for $\cR(D)$. The proof is given in Appendix~\ref{sec6sub2}.

\begin{corollary}
\label{cor1}
For any $D\geq0$,
\begin{align}
&\hspace{-10pt}\cl\cR(D) \supset \S(D) \\
&\phantom{xx} \coloneqq \left\{ \begin{array}{rcl} (R,R_c)\in\R^2&:& \exists P_{X,Y,U}\in\H(D) \text{ s.t.} \\
      R&\geq &I(X;U), \\
      R+R_c&\geq &I(X,Y;U) \end{array} \right\} \label{eq10},
\end{align}
where
\begin{align}
  \H(D) \coloneqq& \left\{ \begin{array}{cl} P_{X,Y,U}:& P_{X} = \mu, P_{Y} = \psi, \\
      &\cE[\rho(X,Y)]\leq D, X-U-Y, \\
      &|\sU|\leq |\sX\|\sY|+1 \end{array} \right\}. \label{eq11}
\end{align}
\end{corollary}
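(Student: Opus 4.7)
The plan is to lift Cuff's channel synthesis result (Theorem~\ref{thm5}) to our setting: first use it to build a code whose output is only approximately $\psi^n$-distributed, then attach a post-processing coupling kernel to the decoder so that the output is exactly $\psi^n$-distributed while the distortion changes only by a vanishing amount.

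Fix $(R,R_c) \in \S(D)$ witnessed by some $P_{X,Y,U} \in \H(D)$ and set $\pi \coloneqq P_{X,Y}$. Because $P_X = \mu$, $P_Y = \psi$, $\cE_\pi[\rho(X,Y)] \le D$, and the Markov chain $X - U - Y$ all hold, the same triple belongs to the set $\D$ of Theorem~\ref{thm5} applied with input distribution $\mu$ and channel $Q = P_{Y|X}$, and consequently $(R,R_c)$ is achievable for this synthesis problem. Hence for any $\delta > 0$ and all $n$ large enough there is an $(n,R+\delta,R_c+\delta)$ randomized source code whose output $\tilde Y^n$ satisfies
\begin{align*}
\|P_{X^n,\tilde Y^n} - \pi^n\|_{TV} < \delta,
\end{align*}
and in particular $\|P_{\tilde Y^n} - \psi^n\|_{TV} < \delta$. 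I then compose the decoder with a stochastic kernel $W_{\hat Y^n \mid \tilde Y^n}$ obtained from a maximal coupling of $P_{\tilde Y^n}$ and $\psi^n$; the composite kernel remains a regular conditional distribution on $\sY^n$ given $(J,K)$, so this produces a valid $(n,R+\delta,R_c+\delta)$ randomized source code whose output $\hat Y^n$ is exactly $\psi^n$-distributed and satisfies $\Pr(\tilde Y^n \ne \hat Y^n) < \delta$. No additional common randomness is needed since the extra sampling is private to the decoder.

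It remains to show that $\cE[\rho_n(X^n,\hat Y^n)]$ can be driven arbitrarily close to $D$. In the finite alphabet case, $\rho \le \rho_{\max}$, and splitting on $\{\tilde Y^n = \hat Y^n\}$ combined with a direct TV comparison of $\cE_{P_{X^n,\tilde Y^n}}[\rho_n]$ and $\cE_{\pi^n}[\rho_n]$ yields $\cE[\rho_n(X^n,\hat Y^n)] \le D + 2\rho_{\max}\delta$. The continuous case is the main obstacle, since $\rho = d^2$ is unbounded and a naive TV bound fails. There I would replace the maximal coupling by an optimal quadratic transport coupling and apply the triangle inequality for $d$ together with $(a+b)^2 \le (1+\eta)a^2 + (1+1/\eta)b^2$ to obtain
\begin{align*}
\cE[\rho_n(X^n,\hat Y^n)] \le (1+\eta)\cE[\rho_n(X^n,\tilde Y^n)] + (1+1/\eta)\cE[\rho_n(\tilde Y^n,\hat Y^n)].
\end{align*}
The first term converges to $D$ by TV proximity combined with uniform integrability controlled by the finite second moments of $\mu$ and $\psi$ (via a standard truncation argument), and the second term, a normalized squared $W_2$-Wasserstein cost between $P_{\tilde Y^n}$ and $\psi^n$, vanishes as $\delta \to 0$ under the same moment assumptions. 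Sending $\eta \to 0$ slowly gives $\cE[\rho_n(X^n,\hat Y^n)] \le D + \varepsilon$ for any prescribed $\varepsilon > 0$, so $(R+\delta,R_c+\delta) \in \cR(D+\varepsilon)$ for every small $\delta$, and letting $\delta \to 0$ yields $(R,R_c) \in \cl\cR(D)$.
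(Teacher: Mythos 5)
Your proposal is correct and follows essentially the same route as the paper: invoke Cuff's Theorem~\ref{thm5} with $\pi=P_{X,Y}$ to get codes whose output is $\psi^n$ in total variation, then append a coupling kernel at the decoder to make the output exactly $\psi^n$ while perturbing the distortion negligibly (the paper uses the optimal transport coupling and \cite[Theorem 6.15]{Vil09}, where you use a maximal coupling plus a $\rho_{\max}$ bound --- an equivalent and slightly more elementary choice for finite alphabets). Your extra effort on the continuous case is not needed here, since the corollary rests on Cuff's discrete-channel theorem and the cardinality bound in $\H(D)$ presumes finite $\sX,\sY$; the paper's own proof likewise uses $\rho_{\max}$.
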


In general, this inner bound is loose. For example, for $R_c=0$, only the
constraint $R\geq I(X,Y;U)$ is active in (\ref{eq10}) since $I(X,Y;U) \geq
I(X;U)$ always holds. Hence, letting $\S(D,0)$ denote the set of $R$s such that
$(R,0) \in \S(D)$, we obtain
\begin{align}
\S(D,0) = \{R \in \R: \exists P_{X,Y,U}\in\H(D)
\text{ s.t. } R\geq I(X,Y;U)\}. \nonumber
\end{align}
The minimum of $\S(D,0)$ can be written as
\begin{align}
&\min\{R \in
\S(D,0)\} \nonumber \\
&\phantom{xxx}=\min\{C(X;Y): P_{X,Y} \in \G(D)\}=:C_0(\mu\|\psi,D), \nonumber
\end{align}
where $C(X;Y)$ is Wyner's common information~\cite{Wyn75} defined for a given joint
distribution $P_{X,Y}$ by
\begin{align}
C(X;Y) \coloneqq \inf_{U: X-U-Y} I(X,Y;U) ,\label{neq6}
\end{align}
where the infimum is taken over all joint distributions $P_{X,Y,U}$ such that
$U$ has a finite alphabet and $X-U-Y$. However, the resulting rate $C_0(\mu\|\psi,D)$
is not optimal as Example~\ref{exm1} in Section~\ref{sec2sub1} will show.

The suboptimality of $C_0(\mu\|\psi,D)$ implies that a 'separated' solution
which first finds an 'optimal' channel and then synthesizes this channel is not
optimal for the constrained rate distortion problem we consider.

\section{Special Cases}
\label{sec_spec}

The extreme points at $R_c=\infty$ and $R_c=0$ of the rate region $\L(D)$ in our Theorem~\ref{thm2} are of
particular interest. Let $\L(D,R_c)$ be the set of coding rates $R$ such that
$(R,R_c) \in \L(D)$.

\subsection{Unlimited Common Randomness}
If $R_c=\infty$, then the effective constraint in
(\ref{neq10}) is $R \geq I(X;U)$. This was the situation originally
studied in \cite{SaLiYu13} where it was assumed that the common randomness is of
the form of a real-valued random variable that is uniformly distributed on the
interval $[0,1]$. Since $I(X;U) \geq I(X;Y)$ by the data processing
inequality and the condition $X-Y-Y$, we can set $U=Y$ to obtain $\min \{R \in \L(D,\infty)\} =
I(\mu\|\psi,D)$, recovering \eqref{eq_lower_mutual} and thus \cite[Theorem
7]{SaLiYu13}. Furthermore, for the finite alphabet case whenever
$R_c \geq H(Y|X)$, we have from (\ref{neq10}) that $R+R_c \geq I(X;U)+H(Y|X)\geq I(X;Y)+H(Y|X) = H(Y) \geq I(Y;U)$, so the
effective constraint is again $R\geq I(X;U)$. Considering $(X,Y)$ such that $P_{X,Y}$ achieves the minimum in
\eqref{eq_lower_mutual} and letting $U=Y$, we have
\begin{align}
\min \{R \in \L(D,R_c)\} &= I(\mu\|\psi,D) \label{neq3}
\end{align}
or equivalently
\begin{align}
\L(D,R_c) &= \L(D,\infty). \label{neq2}
\end{align}
Hence, $H(Y|X)$ is a sufficient common
randomness rate above which the minimum communication rate does not
decrease. In fact, letting
\[
  R_c^{\,\text{min}} = \min\{  R_c: \L(D,R_c) = \L(D,\infty)\}
\]
we can determine $  R_c^{\,\text{min}}$ in terms of the
so-called  \emph{necessary conditional
  entropy} \cite{Cuf13}, defined for a joint distribution $P_{X,Y}$ as
\begin{align}
H(Y \dag X ) \coloneqq \min_{f: X-f(Y)-Y} H(f(Y)|X) \nonumber
\end{align}
where minimum is taken over all functions $f:\sY\to \sY$ such that $
X-f(Y)-Y$. Using the discussion in \cite[Section VII-C]{Cuf10} one can verify
that $ R_c^{\,\text{min}}$ is the minimum of $H(Y \dag X )$ over all joint
distributions of $(X,Y)$ achieving the minimum in \eqref{eq_lower_mutual}.
Indeed, for any joint distribution $P_{X,Y}$ achieving the minimum in
\eqref{eq_lower_mutual}, any function $f$ with the property
\begin{align}
f(y) = f(\tilde{y}) \Leftrightarrow P_{X|Y}(\,\cdot\,|y) = P_{X|Y}(\,\cdot\,|\tilde{y}) \label{nneq1}
\end{align}
minimizes $H(f(Y)|X)$ and satisfies $X-f(Y)-Y$; that is, $H(Y \dag X ) = H(f(Y)|X)$.

In general, for an arbitrary output distribution $\psi$, it may not be true
  that $H(Y \dag X) = H(Y|X)$ for a joint distribution achieving the minimum
  in \eqref{eq_lower_mutual}. Therefore, the Markov chain $X-Y-Y$ does not
  necessarily achieve $R_c^{\min}$. However, in the special case where the
  rate-distortion  function
\[
  R(D)=  \min_{\psi} I(\mu\|\psi,D),
\]
is achieved by a unique output distribution $\psi$, we have the following
proposition.

\begin{proposition}
\label{rdprop}
  Assume the rate-distortion function $R(D)$ is achieved by the unique
  output distribution $\psi$. Then $H(Y
  \dag X) = H(Y|X)$ and the Markov chain $X - Y - Y$ (i.e., $U=Y$) achieves
  $R_c^{\min}$, where $(X,Y)$  achieve the rate-distortion function. In
  this case,  $R+R_c \ge H(Y)$ when $R=R(D)$.
\end{proposition}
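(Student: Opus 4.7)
The plan is to prove the three assertions of the proposition in order: (i) $H(Y \dag X) = H(Y|X)$, (ii) $U=Y$ achieves $R_c^{\min}$, and (iii) $R + R_c \ge H(Y)$ when $R = R(D)$. The bulk of the work is in (i); the remaining claims follow quickly from it using the characterization of $\L(D)$ in Theorem~\ref{thm2}. From \eqref{nneq1}, any function $f$ satisfying $f(y) = f(\tilde y) \Leftrightarrow P_{X|Y}(\,\cdot\,|y) = P_{X|Y}(\,\cdot\,|\tilde y)$ attains the minimum in the definition of $H(Y \dag X)$; it therefore suffices to show that under uniqueness of $\psi$, the map $y \mapsto P_{X|Y}(\,\cdot\,|y)$ is injective on $\supp \psi$, because then the minimizing $f$ is injective on $\supp\psi$ and so $H(f(Y)|X) = H(Y|X)$.

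For the injectivity, I would argue by contradiction via a mass-shifting perturbation. Suppose there exist distinct $y_1, y_2 \in \supp \psi$ with $P_{X|Y}(\,\cdot\,|y_1) = P_{X|Y}(\,\cdot\,|y_2) =: q(\,\cdot\,)$. For $\epsilon \in \R$ of small modulus, set $\tilde\psi(y_1) = \psi(y_1) + \epsilon$, $\tilde\psi(y_2) = \psi(y_2) - \epsilon$, $\tilde\psi(y) = \psi(y)$ elsewhere, and define a new joint $\tilde P_{X,Y}$ by keeping the conditional equal to $q$ on $\{y_1, y_2\}$ and unchanged on the remaining outputs. A direct computation shows $\tilde P_X = \mu$ (mass is only exchanged between outputs with identical $P_{X|Y}$), $H(X|\tilde Y) = H(X|Y)$ (only the sum $\psi(y_1) + \psi(y_2)$ enters this term), so $I(X;\tilde Y) = I(X;Y) = R(D)$, while the distortion changes by $\epsilon\sum_x q(x)[\rho(x,y_1) - \rho(x,y_2)]$. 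Choosing the sign of $\epsilon$ to make this quantity non-positive (either sign works when the bracket vanishes) produces a feasible pair $(X,\tilde Y)\in\G(D)$ with $I(X;\tilde Y)=R(D)$, so $\tilde\psi$ also achieves the minimum in $R(D) = \min_\psi I(\mu\|\psi,D)$; since $\tilde\psi \neq \psi$, this contradicts uniqueness. The main obstacle I foresee is verifying all feasibility constraints simultaneously ($\tilde\psi\ge 0$, $\tilde P_X = \mu$, and $\cE[\rho(X,\tilde Y)]\le D$); the crucial structural observation is that exchanging mass between outputs sharing the same $P_{X|Y}$ automatically preserves the $X$-marginal and $I(X;Y)$, leaving only the distortion constraint to be handled by the choice of sign of $\epsilon$.

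Once (i) is in hand, (ii) and (iii) follow directly. Under uniqueness every joint achiever of $R(D)$ has $P_Y = \psi$, so $H(Y|X) = H(\psi) - R(D)$ is common to all achievers; combining with (i) gives $R_c^{\min} = \min_{(X,Y)} H(Y \dag X) = H(Y|X)$. Taking $U=Y$ in Theorem~\ref{thm2} yields $I(X;U) = R(D)$ and $I(Y;U) = H(\psi)$, so the pair $(R(D), H(Y|X))$ lies in $\L(D)$, proving that $U = Y$ achieves $R_c^{\min}$. For (iii), the section $\L(D,R_c) \coloneqq \{R : (R, R_c) \in \L(D)\}$ is closed (since $\L(D)$ is closed) and upward closed (by inspection of Theorem~\ref{thm2}), hence of the form $[\inf\L(D,R_c),\infty)$. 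If $(R(D),R_c)\in\L(D)$, then $\inf\L(D,R_c)\le R(D)$ and, since $\L(D,R_c) \subseteq \L(D,\infty) = [R(D),\infty)$, equality holds, so $\L(D,R_c) = \L(D,\infty)$; by the definition of $R_c^{\min}$ this forces $R_c \ge R_c^{\min} = H(Y|X)$, and therefore $R+R_c \ge R(D) + H(Y|X) = H(Y)$.
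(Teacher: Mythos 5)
Your proposal is correct and uses essentially the same argument as the paper: the key step is identical (exchanging output mass between two points with equal posteriors $P_{X|Y}$ preserves $\mu$ and $I(X;Y)$, can only improve the distortion with the right choice of direction, yet changes the output marginal, contradicting uniqueness of $\psi$), with the only cosmetic difference that you shift an infinitesimal mass $\epsilon$ while the paper merges all of $P_Y(\tilde y)$ into $\bar y$. Your explicit derivations of the $R_c^{\min}$ and $R+R_c\ge H(Y)$ claims are consistent with the paper, which leaves these as immediate consequences of the discussion preceding the proposition.
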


\begin{proof}
  The proof is by contradiction.  Suppose that $H(Y \dag X) < H(Y|X)$. This
  implies the existence of a function $f$ with the property \eqref{nneq1} and
  $H(f(Y)|X) < H(Y|X)$. In particular, there exist $\bar{y},\tilde{y} \in \sY$
  such that $\bar{y} \neq \tilde{y}$, $P_Y(\bar{y}), P_Y(\tilde{y}) > 0$, and
  $P_{X|Y}(\,\cdot\,|\bar{y}) = P_{X|Y}(\,\cdot\,|\tilde{y})$. Without loss of
  generality we can assume
  $\cE[\rho(X,Y)|Y=\bar{y}] \leq \cE[\rho(X,Y)|Y=\tilde{y}]$. Define a new pair
  $(\tilde{X},\tilde{Y})$ with the joint distribution given by
  $P_{\tilde{X}|\tilde{Y}} = P_{X|Y}$ and $P_{\tilde{Y}}(y) = P_{Y}(y)$ if
  $y \in \sY\setminus\{\bar{y},\tilde{y}\}$ and
  $P_{\tilde{Y}}(\bar{y}) = P_{Y}(\bar{y}) + P_{Y}(\tilde{y})$ (so,
  $P_{\tilde{Y}}(\tilde{y}) = 0$). Hence,
  $\cE[\rho(\tilde{X},\tilde{Y})] \leq \cE[\rho(X,Y)]$. Since
  $P_{X|Y}(\,\cdot\,|\bar{y}) = P_{X|Y}(\,\cdot\,|\tilde{y})$, we have
  $H(\tilde{X}|\tilde{Y}) = H(X|Y)$ and $P_{\tilde{X}} = P_{X}=\mu$. Therefore,
  $I(\tilde{X};\tilde{Y}) = I(X;Y)$, and $(\tilde{X},\tilde{Y})$ also achieves
  the rate distortion function. But, $P_{\tilde{Y}} \neq P_Y=\psi$, which is a
  contradiction.
\end{proof}

\subsection{No Common Randomness} \label{sec2sub1}
Setting $R_c=0$ means that no common randomness is available.\footnote{Ram Zamir's
  question regarding the minimum coding  rate in this special case has
  inspired  our investigation of the general rate region $\cR(D)$.}  In this
case (\ref{neq10}) gives $R \geq \max\bigl(I(X;U),I(Y;U)\bigr)$.
Hence the minimum communication rate at distortion $D$ is given by
\[
\min \{R \in \L(D,0)\} = I_0(\mu\|\psi,D),
\]
where
\begin{align}
&I_0(\mu\|\psi,D) \nonumber \\* &\coloneqq \min \bigl\{
\max\bigl(I(X;U),I(Y;U)\bigr): P_{X,Y,U} \in \M(D) \bigr\}. \label{neq4}
\end{align}

Note that the minimum achievable coding rate $I_0(\mu\|\psi,D)$ is
\emph{symmetric} with respect to $\mu$ and $\psi$, i.e.,
$I_0(\mu\|\psi,D)=I_0(\psi\|\mu,D)$. This is clear from the definition
\eqref{neq4}, but can also be deduced from the operational meaning of
$I_0(\mu\|\psi,D)$ since in the absence of the common randomness $K$,
the encoder-decoder structure is fully reversible.  In
general such symmetry no longer holds for $\min\{R \in \cR(D,R_c)\}$ when
$R_c>0$.

The following lemma states
that $I_0(\mu\|\psi,D)$ is convex in $D$.  The proof simply follows from a
time-sharing argument and the operational meaning of $ I_0(\mu\|\psi,D)$
implied by Theorem~\ref{thm2}. It is given in Appendix~\ref{sec6sub1}.

\begin{lemma}
\label{lemma1}
$I_0(\mu\|\psi,D)$  is a convex function of $D$.
\end{lemma}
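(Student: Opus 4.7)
The plan is to exploit the operational characterization provided by Theorem~\ref{thm2}: at $R_c=0$, the quantity $I_0(\mu\|\psi,D)$ is exactly the infimum of coding rates $R$ such that $(R,0)$ is achievable at distortion $D$. Convexity will then follow from a standard concatenation/time-sharing construction applied at the code level, bypassing any direct manipulation of the information-theoretic expression \eqref{neq4}.

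Concretely, I would fix $D_1,D_2\ge 0$ and $\lambda\in(0,1)$, set $D_\lambda=\lambda D_1+(1-\lambda)D_2$, $R_i=I_0(\mu\|\psi,D_i)$ for $i=1,2$ (assuming both are finite, else convexity is trivial), and show that
\[
I_0(\mu\|\psi,D_\lambda)\le \lambda R_1+(1-\lambda)R_2.
\]
Given $\varepsilon>0$, by the achievability direction of Theorem~\ref{thm2} there exist, for all sufficiently large $m_1,m_2$, randomized $(m_i,R_i+\varepsilon,0)$ codes whose reproductions have distribution exactly $\psi^{m_i}$ and whose expected distortions are at most $D_i+\varepsilon$. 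For block length $n$, I would set $n_1=\lceil \lambda n\rceil$ and $n_2=n-n_1$, and construct a $(n,R_n,0)$ code by applying the $n_1$-code to $(X_1,\dots,X_{n_1})$ and (independently) the $n_2$-code to $(X_{n_1+1},\dots,X_n)$, concatenating the two descriptions. The effective coding rate satisfies $nR_n\le n_1(R_1+\varepsilon)+n_2(R_2+\varepsilon)+O(1)$, so $\limsup_n R_n\le \lambda R_1+(1-\lambda)R_2+\varepsilon$; the expected distortion is $\frac{n_1}{n}(D_1+\varepsilon)+\frac{n_2}{n}(D_2+\varepsilon)\to D_\lambda+\varepsilon$; and, crucially, the two sub-blocks of $Y^n$ are independent, giving the exact product $\psi^{n_1}\times\psi^{n_2}=\psi^n$ as required. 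Letting $\varepsilon\downarrow 0$ yields the claim.

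The one point that requires genuine care, and which I view as the main (minor) obstacle, is verifying that the time-shared code uses no common randomness. Since $n_1,n_2$ are deterministic functions of $n$ and $\lambda$, the switching instant is known to both encoder and decoder without any shared bits. Moreover, although each sub-code may invoke private randomization inside its encoder/decoder, such internal randomization is already allowed by the definition of an $(n,R,0)$ randomized source code and does not contribute to $R_c$. Drawing the private randomness of the two sub-codes independently is what guarantees independence of the two $Y$-blocks and hence the exact output distribution $\psi^n$; this is what makes the argument work in the strong sense required by Definition~\ref{def1}, rather than only in a total-variation-approximate sense. Once this is noted, the rest is a clean asymptotic bookkeeping argument on rates and distortions.
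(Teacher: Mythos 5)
Your proposal is correct and is essentially the paper's own proof: the paper likewise establishes convexity via the operational meaning of $I_0(\mu\|\psi,D)$ implied by Theorem~\ref{thm2}, concatenating independent sub-codes for $D_1$ and $D_2$ in proportion $\alpha$ (it replicates fixed-length sub-codes $k_N$ and $N-k_N$ times rather than using two long sub-blocks, an immaterial difference). Your remarks on the deterministic switching instant and on independence of the sub-blocks yielding the exact product output distribution $\psi^n$ are exactly the points that make the time-sharing argument valid here.
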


An upper bound for $I_0(\mu\|\psi,D)$ can be given in terms of Wyner's common
information. Since $\max \bigl(I(X;U),I(Y;U)\bigr) \leq I(X,Y;U)$,
we have $I_0(\mu\|\psi,D) \leq \min \{I(X,Y;U): P_{X,Y,U} \in \M(D)\}$. The
latter expression can also be written as
\begin{align}
\min \{C(X;Y): P_{X,Y} \in \G(D)\} =: C_0(\mu\|\psi,D). \label{neq5}
\end{align}
However, the resulting upper bound $
I_0(\mu\|\psi,D) \le C_0(\mu\|\psi,D)$ is not tight in general as the next
example shows.

\begin{example}\label{exm1}

  Let $\sX = \sY = \{0,1\}$, and let $\mu = \psi = \sBer(1/2)$, i.e., $\mu(0) =
  \mu(1) = \frac{1}{2}$.  Assume the distortion measure $\rho$ is the Hamming
  distance $\rho(x,y) = 1_{\{x \neq y\}}$  (which satisfies
  the assumptions in Section~\ref{sub1sec1}). If $X \sim \mu$ and $Y \sim \psi$,
  then the channel $P_{Y|X}$ from $X$ to $Y$ must be Binary Symmetric Channel
  (BSC) with some crossover probability $a_0$, i.e.,
\begin{align}
P_{Y|X}(\,\cdot\,|0) = 1-P_{Y|X}(\,\cdot\,|1) = \sBer(a_0). \nonumber
\end{align}
Wyner in \cite[Section 3]{Wyn75} showed that when $a_0 \in [0,1/2]$,
\begin{align}
C(X;Y) = 1 + h(a_0) - 2h(a_1), \nonumber
\end{align}
where $a_1 = \frac{1}{2}(1 - \sqrt{1-2a_0})$, and $h(\lambda) =
-\lambda\log(\lambda) - (1-\lambda)\log(1-\lambda)$. Define $C(a_0) \coloneqq 1
+ h(a_0) - 2h(a_1)$ which is decreasing and strictly concave in
$[0,1/2]$. Notice that $\cE[\rho(X,Y)] = a_0$ when
$P_{Y|X}=\mathsf{BSC}(a_0)$. Hence, for any $D \in [0,1/2]$, we have
\begin{align}
&C_0(\mu\|\psi,D)  \nonumber \\*
&= \min\{C(X;Y): P_{X,Y} \in \G(D)\} \nonumber \\
              &= \min\{C(X;Y): P_X=\mu, P_{Y|X}=\mathsf{BSC}(a_0), a_0 \leq D\}  \nonumber \\
              &= \min_{a_0 \leq D} C(a_0) = C(D) \nonumber
\end{align}
implying that $C_0(\mu\|\psi,D)$ is strictly concave for $D\in [0,1/2]$.
It is straightforward to prove that $C_0(\mu\|\psi,0)= I_0(\mu\|\psi,0)=1$ and $C_0(\mu\|\psi,1/2)=
I_0(\mu\|\psi,1/2)=0$. Therefore, by Lemma~\ref{lemma1} we have
\[
I_0(\mu\|\psi,D) < C_0(\mu\|\psi,D), \quad D\in (0,1/2).
\]
\end{example}

\medskip

\section{Examples}\label{sec exs}

In general determining the entire rate region $\L(D)$ in Theorem~\ref{thm2}
seems to be difficult even for simple cases. In this section we obtain
possibly suboptimal achievable rate regions (inner bounds) for two setups
by restricting the channels $P_{U|X}$ and $P_{Y|U}$ so that the resulting
optimization problem becomes manageable.

\subsection{Doubly Symmetric Binary Source}
\label{ex1}

In this section we obtain an inner bound for the setup in Example~\ref{exm1}
(i.e., when $\sX=\sY=\{0,1\}$, $\mu=\psi=\sBer(1/2)$, and $\rho$ the Hamming
distance) by restricting the auxiliary random variable $U$ to be
$\sBer(1/2)$. Since $P_X=P_U=P_Y=\sBer(1/2)$, for any $P_{X,Y,U} \in \M(D)$,
the channels $P_{U|X}$ and $P_{Y|U}$ must be $\mathsf{BSC}(a_1)$ and
$\mathsf{BSC}(a_2)$, respectively, for some $a_1, a_2 \in [0,1]$.  Hence, since
$\cE[\rho(X,Y)] = a$ when $P_{X|Y} = \mathsf{BSC}(a)$, the resulting achievable
rate region is
\begin{align}
\L_s(D) &=
\left\{\begin{array}{rcl}
(R,R_c) \in \R^2 &:& (a_1,a_2) \in \Phi(D) \text{  s.t. } \\
 R &\geq &1-h(a_1),  \\
 R+R_c &\geq &1-h(a_2).
\end{array}\right\} \nonumber,
\intertext{where}
\Phi(D) &\coloneqq \{(a_1,a_2) \in [0,1]^2: a_1+a_2-2a_1a_2 \leq D\}. \nonumber
\end{align}

\hspace{-13pt} Let us define $\varphi(a_1,a_2)=a_1+a_2-2a_1a_2$. Note that since $\varphi(\frac{1}{2}+r,\frac{1}{2}+m) = \frac{1}{2} - 2 rm$ and $h(\frac{1}{2}-r)=h(\frac{1}{2}+r)$ for any $r,m \in [\frac{-1}{2},\frac{1}{2}]$; we may assume without loss of generality that $a_1,a_2 \in [0,\frac{1}{2}]$ in the definition of $\Phi(D)$. Furthermore, since $\varphi(a_1,a_2) > D$ when $D<a_1<\frac{1}{2}$ or $D<a_2<\frac{1}{2}$, we can refine the definition of $\L_s(D)$ for $0\leq D < \frac{1}{2}$ as
\begin{align}
\L_s(D) &=
\left\{\begin{array}{rcl}
(R,R_c) \in \R^2 &:& (a_1,a_2) \in \Phi_r(D) \text{  s.t. } \\
 R &\geq &1-h(a_1),  \\
 R+R_c &\geq &1-h(a_2).
\end{array}\right\} \nonumber,
\intertext{where}
\Phi_r(D) &\coloneqq \{(a_1,a_2) \in [0,D]^2: a_1+a_2-2a_1a_2 \leq D\}. \nonumber
\end{align}
Notice that for any fixed $a_1$, $(a_1,a_2) \in \Phi_r(D)$ if and only if $a_2 \leq \frac{D-a_1}{1-2a_1}$, where the expression on the righthand side of the inequality is a concave function of $a_1$. Hence, $\Phi_r(D)$ is a convex region. In the remainder of this section we characterize the boundary $\bigcup_{R_c} \min \{R : (R,R_c) \in \L_s(D)\}\times \{R_c\}$ of $\L_s(D)$.

If $R_c=\infty$, then $(R,\infty) \in \L_s(D) \Leftrightarrow R \geq 1-h(a_1)$ where $a_1 \in [0,D]$. Hence, the minimum $R$ is equal to $1-h(D)$ for $R_c = \infty$. Moreover, if $R=1-h(D)$ or equivalently $a_1 = D$, then $(R,R_c) \in \L_s(D) \Leftrightarrow R_c+1-h(D) \geq 1-h(a_2)=1-h(0)=1$ since $(D,a_2) \in \Phi_r(D)$ only if $a_2=0$. Hence, if and only if $R_c \geq h(D)$, then
\begin{align}
\min\{R: (R,R_c)\in \L_s(D)\}=1-h(D). \nonumber
\end{align}
Note that since $1-h(D)$ is the rate-distortion function for the $\sBer(1/2)$
source and $\psi=\sBer(1/2)$ is the unique output distribution achieving this
rate-distortion function, Proposition~\ref{rdprop} implies that the inner bound we
obtain in this section is tight for $R_c \geq h(D) = H(\psi) - (1-h(D)) =
R_c^{\min}$.

Recall that for an arbitrary $0 \leq R_c < h(D)$, $(R,R_c) \in \L_s(D)
\Leftrightarrow R \geq \max\{1-h(a_1),1-h(a_2)-R_c\}$ where $(a_1,a_2) \in
\Phi_r(D)$. We now prove that the minimum $R$ is attained when
$1-h(a_1)=1-h(a_2)-R_c$ and $a_1+a_2-2a_1a_2=D$. The second equality is clear
since the binary entropy function $h$ is increasing in $[0,D]$. To prove the
first claim by contradiction, let us assume (without loss of generality) that
the minimum is achieved when $1-h(a_1) > 1-h(a_2)-R_c$ $\bigl($so $\min\{R:
(R,R_c) \in \L_s(D)\} = 1-h(a_1)$$\bigr)$. Note that
 $(a_1,a_2) \in \Phi_r(D)$ if and only if $a_2
  \leq \frac{D-a_1}{1- 2 a_1}$, where  $\frac{D-a_1}{1- 2
    a_1}$ is a positive, decreasing,  and concave function of $a_1$ in
  $(0,D)$. This and the fact that  $h$ is
  increasing and continuous imply that  there exist
  $\varepsilon_1,\varepsilon_2>0$ such that
  $(a_1+\varepsilon_1,a_2-\varepsilon_2) \in \Phi_r(D)$ and
  $1-h(a_1+\varepsilon_1) \geq 1-h(a_2-\varepsilon_2)-R_c$. But $\min\{R:
  (R,R_c) \in \L_s(D)\} = 1-h(a_1) > 1-h(a_1+\varepsilon_1)$, which is a
  contradiction.

Hence, for all $D \in (0,\frac{1}{2})$ the minimum coding rate when $0 \leq R_c < h(D)$ is given by
\begin{align}
&\min\{R:(R,R_c) \in \L_s(D)\} \nonumber \\
&\phantom{xxxxxxxxxxx}=\min\{1-h(a_1): (a_1,a_2) \in \Pi(D,R_c) \} \nonumber \\
\intertext{where}
&\Pi(D,R_c) \nonumber \\
&\phantom{xxx}\coloneqq\left\{\hspace{-5pt}\begin{array}{rl}
(a_1,a_2) \in \Phi_r(D) : 1-h(a_1) = 1-h(a_2)-R_c& \\
\text{  and } a_1+a_2-2a_1a_2 =D \phantom{x}&
\end{array}\hspace{-15pt}\right\}. \nonumber
\end{align}



\begin{figure}[h]
\centering
\includegraphics[width=3.8in, height=2.5in]{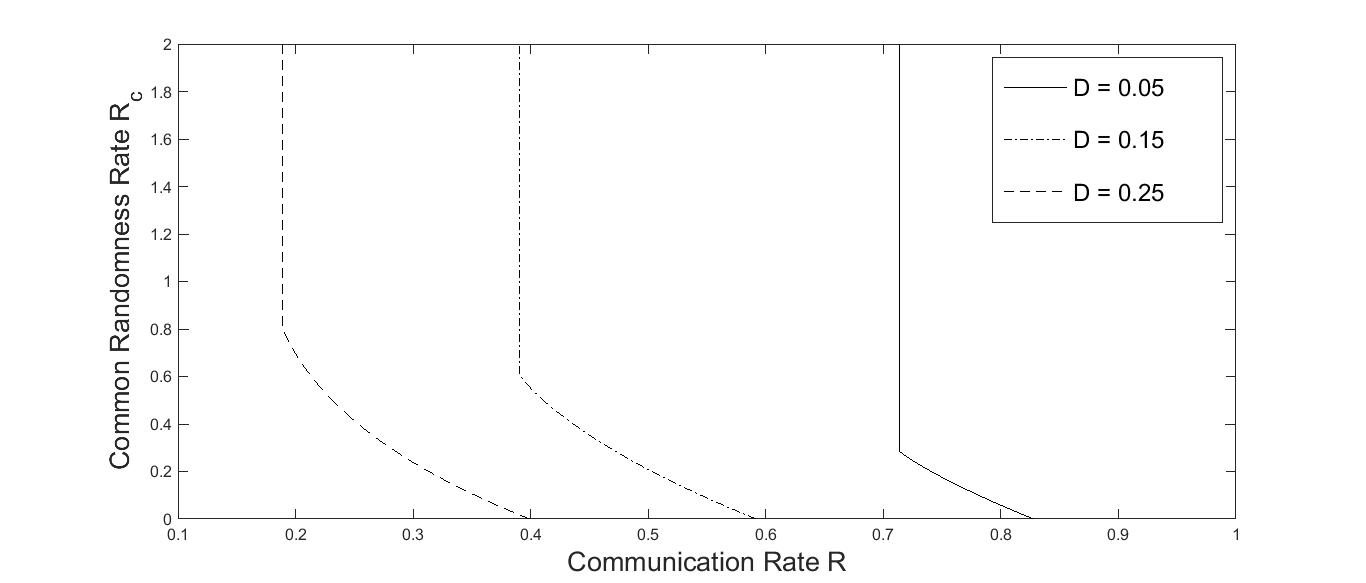}
\caption{$\L_s(D)$ for binary symmetric source at different distortion levels
  $D$. }
\label{gr1}
\end{figure}

Figure~\ref{gr1} shows the rate region $\L_s(D)$ for $D=0.25$,
  $D=0.15$, and $D=0.05$. At the boundary of $\L_s(D)$, the coding rate $R$
  ranges from $1-h(a^{*})=0.39, 0.59, 0.82$ bits
  $\bigl($$a^{*}=\frac{1}{2}(1-\sqrt{1-2D})$$\bigr)$ to $h(D)=0.19, 0.4, 0.72$
  bits, respectively, while the common randomness rate $R_c$ ranges from $0$ to
  $1-h(D)=0.81, 0.6, 0.28$  for $D = 0.25$, $D = 0.15$, and
  $D = 0.05$, respectively.

\subsection{Gaussian Source}

Let $\N(m,\sigma)$ denote a Gaussian random variable with mean $m$ and variance $\sigma^2$ (similar notation will be used for the vector case). In this section, we obtain an inner bound for the case $\sX = \sY = \R$, $\mu = \N(0,\sigma_X)$, $\psi = \N(0,\sigma_Y)$, and $\rho$ is the squared error distortion (i.e., $\rho(x,y) = |x-y|^2$) by restricting $(X,U,Y)$ to be Gaussian $\bigl($or, equivalently, restricting $(X,U)$ and $(U,Y)$ to be Gaussian since $X - U - Y$$\bigr)$.

\begin{remark}\label{rmk1}
Recall that for $R_c=\infty$, the minimum coding rate is given by (\ref{eq_lower_mutual}). However if $X \sim \N(0,\sigma_X)$ and $Y \sim \N(0,\sigma_Y)$, then for any $P_{X,Y} \in \G(D)$, one has the lower bound
\begin{align}
&I(X;Y) = h(X) + h(Y) - h(X,Y) \nonumber \\
&\phantom{xx}\geq \frac{1}{2} \log(2\pi e \sigma_X^2) + \frac{1}{2} \log(2 \pi e \sigma_Y^2) - \log(2 \pi e \cdet(C)^{\frac{1}{2}}), \nonumber \end{align}
where $C$ is the covariance matrix of $(X,Y)$. The equality is achieved when $(X,Y)$ is jointly Gaussian \cite[Theorem 8.6.5]{CoTh06}. Hence, we can restrict $(X,Y)$ to be Gaussian in the definition of $I(\mu\|\psi,D)$, i.e.,
\begin{align}
I(\mu\|\psi,D) &\coloneqq \min\{ I(X,Y): P_{X,Y}\in \G_g(D)\}, \nonumber \\
\intertext{where}
\G_g(D)&\coloneqq \{P_{X,Y} \in \G(D): P_{X,Y} = \N(0,C) \text{ for some } C\}. \nonumber
\end{align}
This implies that the inner bound we obtain in this section is tight for $R_c=\infty$ $\bigl($i.e., $\L_s(D,\infty)=\L(D,\infty)$$\bigr)$. $\L(D,\infty)$ for the case $\mu=\psi=\N(0,\sigma)$ was derived in \cite[Proposition 2]{LiKlKl11}.
\end{remark}

Note that without loss of generality we can take $U$ to have zero mean and unit
variance. Indeed, let $\tilde{U} = (U-\delta_U)/\sigma_U$. Then $\tilde{U} \sim
\N(0,1)$, $X-\tilde{U}-Y$, and $(X,\tilde{U},Y)$ is Gaussian with $I(X;U) =
I(X;\tilde{U})$ and $I(Y;U) = I(Y;\tilde{U})$. Hence, in the remainder of this
section, we assume $U \sim \N(0,1)$.


Let us write $U=aX+V$ and $Y=bU+W$, where $a,b \in \R$, and $V \sim \N(0,\sigma_V)$, $W \sim \N(0,\sigma_W)$, and $(X,V,W)$ are independent. With this representation, the constraints in the definition of the achievable rate region become
\begin{align}
1 &= a^2 \sigma_X^2 + \sigma_V^2, \nonumber \\
\sigma_Y^2 &= b^2+\sigma_W^2, \nonumber \\
(1-ab)^2\sigma_X^2+b^2\sigma_V^2+\sigma_W^2 &\leq D, \nonumber
\end{align}
Then, if we substitute $\sigma_V^2=1-a^2 \sigma_X^2 \geq 0$ and $\sigma_W^2=\sigma_Y^2-b^2 \geq 0$ into the last equation, we can
write the distortion constraint as
\begin{align}
\sigma_X^2+\sigma_Y^2-2ab \sigma_X^2\leq D. \nonumber
\end{align}
Since
\begin{align}
I(X;U) &= H(X) + H(U) - H(X,U) \nonumber \\
&= \frac{1}{2} \log(2\pi\e \sigma_X^2) + \frac{1}{2} \log(2\pi\e) - \log(2\pi\e\cdet(C_X)^{\frac{1}{2}}) \nonumber \\
&= \frac{1}{2}\log\bigl(\frac{1}{(1-a^2\sigma_X^2)}\bigr) \nonumber \\
\intertext{and}
I(Y;U) &= H(Y) + H(U) - H(Y,U) \nonumber \\
&= \frac{1}{2} \log(2\pi\e \sigma_Y^2) + \frac{1}{2} \log(2\pi\e) - \log(2\pi\e\cdet(C_Y)^{\frac{1}{2}}) \nonumber \\
&= \frac{1}{2}\log\bigl(\frac{\sigma_Y^2}{(\sigma_Y^2-b^2)}\bigr), \nonumber
\end{align}
where $C_X$ is the covariance matrix of $(X,U)$ and $C_Y$ is the covariance matrix of $(Y,U)$, the resulting achievable rate region can be written as
\begin{align}
&\L_s(D) =
\left\{\begin{array}{rcl}
(R,R_c) \in \R^2 &:& (a,b) \in \Psi(D) \text{  s.t. } \\
 R &\geq &\frac{1}{2} \log\bigl(\frac{1}{(1-a^2\sigma_X^2)}\bigr),  \\
 R+R_c &\geq &\frac{1}{2} \log\bigl(\frac{\sigma_Y^2}{(\sigma_Y^2-b^2)}\bigr).
\end{array}\right\} \nonumber,
\intertext{where}
&\Psi(D) \nonumber \\
&\phantom{xx}\coloneqq \{(a,b) \in [0,\sigma_X^{-1}]\times [0,\sigma_Y]: \sigma_X^2+\sigma_Y^2-2ab \sigma_X^2\leq D\}.\nonumber
\end{align}
Note that the region $\Psi(D)$ is convex. Let us define $I_1(a)=\log\bigl(\frac{1}{(1-a^2\sigma_X^2)}\bigr)$ and $I_2(b)=\log\bigl(\frac{\sigma_Y^2}{(\sigma_Y^2-b^2)}\bigr)$; then $I_1$ and $I_2$ are increasing functions.
As in Section~\ref{ex1}, we characterize the boundary $\bigcup_{R_c} \min \{R : (R,R_c) \in \L_s(D)\}\times \{R_c\}$ of $\L_s(D)$.

If $R_c=\infty$, then $(R,\infty) \in \L_s(D) \Leftrightarrow R \geq I_1(a)$ where $(a,b) \in [0,\sigma_X^{-1}]\times [0,\sigma_Y]$ and  $\sigma_X^2+\sigma_Y^2-2ab \sigma_X^2\leq D$. Using the monotonicity of $I_1$ and the distortion constraint, it is straightforward to show that
\begin{align}
\min\{R: (R,\infty) \in \L_s(D)\} = I_1\bigl(\frac{\sigma_X^2+\sigma_Y^2-D}{2\sigma_X^2\sigma_Y}\bigr).\nonumber
\end{align}
By Remark~\ref{rmk1}, this is the minimum coding rate (i.e., rate-distortion function) for $R_c=\infty$.

When $0 \leq R_c < \infty$ is arbitrary, we can use the same technique as in Section~\ref{ex1} to prove that the minimum of $R$ is attained when $I_1(a) = I_2(b) - R_c$ and $\sigma_X^2+\sigma_Y^2-2ab \sigma_X^2=D$  ($I_1$ and $I_2$ are increasing continuous functions and $\Psi(D)$ is a convex region with nonempty interior in the upper-right corner of the rectangle $[0,\sigma_X^{-1}]\times [0,\sigma_Y]$). As a consequence, we can describe the minimum coding rate when $0 \leq R_c < \infty$ as follows:
\begin{align}
&\min\{R:(R,R_c) \in \L_s(D)\}=\min \{I_1(a): (a,b) \in \Lambda(D,R_c) \} \nonumber \\
\intertext{where}
&\Lambda(D,R_c)
\coloneqq\left\{\hspace{-5pt}\begin{array}{rl}
(a,b) \in \Psi(D) :& I_1(a) = I_2(b)-R_c \text{  and }  \\
&\sigma_X^2+\sigma_Y^2-2ab \sigma_X^2=D \phantom{x}
\end{array}\hspace{-10pt}\right\}. \nonumber
\end{align}

\begin{figure}[h]
\centering
\includegraphics[width=3.8in, height=2.5in]{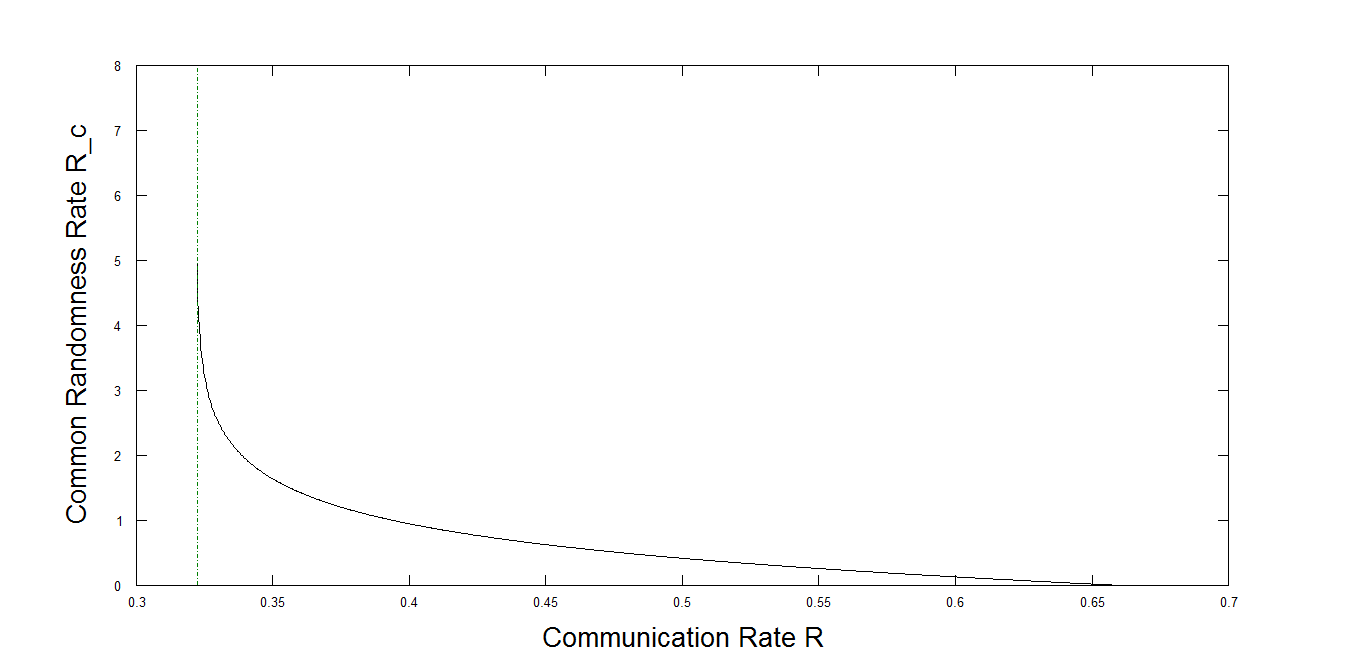}
\caption{$\L_s(D)$ for Gaussian source for $D=0.8$.}
\label{gr2}
\end{figure}

Figure~\ref{gr2} shows the rate region $\L_s(D)$ for $\sigma_X = \sigma_Y = 1$ and $D=0.8$. At the boundary of $\L_s(D)$, the coding rate $R$ ranges
from $I_1(\sqrt{\frac{2-D}{2}})=0.65$ bits to $I_1(\frac{2-D}{2})=0.32$ bits
while the common randomness rate $R_c$ ranges from $0$ to infinity.

\section{TWO VARIATIONS}
\label{sec3}

In this section we consider two variations of the rate-distortion problem defined in Section~\ref{sec2}. Throughout this section we assume that the source alphabet $\sX$ and the reproduction alphabet $\sY$ are finite.

\subsection{Rate Region with Empirical Distribution Constraint}
\label{sec3sub1}

First, we investigate the effect on the achievable rate region of relaxing the strict output distribution constraint on $Y^n$ and requiring only that the empirical output distribution $p_{Y^n}$ converges to the distribution $\psi$.

\begin{definition}
\label{def3}
For any positive real number $D$ and desired output distribution $\psi$, the pair $(R,R_c)$ is said to be \emph{empirically achievable} if there exists a sequence of $(n,R,R_c)$ randomized source codes such that
\begin{align}
\limsup_{n\rightarrow\infty} \cE[\rho_n(X^n,Y^n)] &\leq D,\nonumber \\
\|p_{Y^n}-\psi\|_{TV} &\rightarrow 0 \text{  in probability as } n\rightarrow\infty. \nonumber
\end{align}
\end{definition}

For any $D\geq0$ we let $\cR_e(D)$ denote the set of all empirically achievable rate pairs $(R,R_c)$, and define $\cR_e(D,R_c)$ as the set of coding rates $R$ such that $(R,R_c) \in \cR_e(D)$.

This setup is motivated by the work of Cuff \textit{et. al.} \cite[Section II]{Cuf10} on empirical coordination. The main objective of \cite[Section II]{Cuf10} is to empirically simulate a memoryless channel by a system as in Fig.~\ref{fig1}. To be more precise, let $Q(y|x)$ denote a given  discrete memoryless channel with
input alphabet $\sX$ and output alphabet $\sY$ to  be simulated
(synthesized) for input $X$ having distribution $\mu$. Let $\pi=\mu Q$ be the
joint distribution of the resulting input-output pair $(X,Y)$.
\begin{definition}
\label{def5}
The pair $(R,R_c)$ is said to be \emph{achievable} for empirically synthesizing a memoryless channel $Q$ with input distribution $\mu$ if there exists a sequence of $(n,R,R_c)$ randomized source codes such that
\begin{align}
\lim_{n\rightarrow\infty} \|p_{X^n,Y^n} - \pi\|_{TV} = 0 \text{ in probability}. \label{eq36}
\end{align}
\end{definition}

Let $\C_e$ denote the the set of all achievable $(R,R_c)$ pairs and let $\C_e(R_c)$ denote the set of all rates $R$ such that $(R,R_c) \in \C_e$. The following theorem, which is a combination of \cite[Theorems 2 and 3]{Cuf10}, characterizes the entire set $\C_e$.

\begin{theorem}
\label{thm6}
The set $\C_e$ of all achievable $(R,R_c)$ is given by
\begin{align}
\C_e =& \left\{ \begin{array}{rcl} (R,R_c)\in\R^2 &:& \exists P_{X,Y}\in\G \text{ s.t.} \\
 R&\geq &I(X;Y)\end{array} \right\} \nonumber,
\end{align}
where
\begin{align}
  \G \coloneqq& \{P_{X,Y}: P_{X,Y} = \pi\}. \nonumber
\end{align}
Hence, $\C_e(R_c) = \C_e(0)$ for any $R_c$.
\end{theorem}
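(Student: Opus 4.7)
The plan is to establish the two inclusions separately, with the common randomness rate $R_c$ playing no role in either direction. For achievability, I would use a standard random-coding / joint-typicality construction that uses no common randomness at all. Fix $R > I_\pi(X;Y)$ (the subscript indicates mutual information under $\pi = \mu Q$) and $\varepsilon > 0$. Draw a codebook $\{Y^n(j)\}_{j=1}^{\lceil 2^{nR}\rceil}$ i.i.d.\ from $\psi^n$, where $\psi$ is the $Y$-marginal of $\pi$. On input $X^n$, the encoder selects an index $J$ such that $(X^n, Y^n(J))$ is jointly $\varepsilon$-typical with respect to $\pi$ (and sets $J=1$ otherwise); the decoder outputs $Y^n(J)$. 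The standard packing/covering lemma gives that the probability of encoding failure vanishes as $n \to \infty$ whenever $R > I_\pi(X;Y)$, and on the success event joint $\varepsilon$-typicality forces $\|p_{X^n,Y^n}-\pi\|_{TV} \leq \varepsilon$. Since $K$ is unused, $(R,R_c)$ is empirically achievable for every $R_c \geq 0$.

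For the converse, the key structural fact is that $K$ is independent of $X^n$. Using the Markov relation $X^n - (J,K) - Y^n$,
\[
nR \geq H(J) \geq H(J\mid K) \geq I(X^n; J\mid K) = I(X^n; J, K) \geq I(X^n; Y^n),
\]
where the equality $I(X^n; J\mid K) = I(X^n; J, K)$ uses the independence of $X^n$ and $K$. Single-letterization via the i.i.d.\ source gives
\[
I(X^n; Y^n) = \sum_{i=1}^n \bigl[H(X_i) - H(X_i\mid X^{i-1}, Y^n)\bigr] \geq \sum_{i=1}^n I(X_i; Y_i).
\]
Introducing a uniform time-sharing index $T$ on $\{1,\ldots,n\}$ independent of everything, and noting that $X_T$ is independent of $T$ because the $X_i$ share marginal $\mu$, one obtains $R \geq I(X_T; Y_T\mid T) \geq I(X_T; Y_T)$, the final step using $I(X_T;T)=0$.

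To close the argument, note that $P_{X_T, Y_T} = \cE[p_{X^n, Y^n}]$. By Jensen's inequality applied to the convex functional $\|\,\cdot\, - \pi\|_{TV}$, the assumed convergence in probability of the uniformly bounded random variable $\|p_{X^n, Y^n} - \pi\|_{TV}$ to zero yields $\|P_{X_T, Y_T} - \pi\|_{TV} \to 0$. Continuity of mutual information on the finite-alphabet probability simplex then gives $I(X_T; Y_T) \to I_\pi(X;Y)$, whence $R \geq I_\pi(X;Y)$, matching the claimed boundary. The most delicate step is the time-sharing reduction: the i.i.d.\ structure of the source, by making $X_T$ independent of $T$, is precisely what allows the conditional mutual information $I(X_T;Y_T\mid T)$ to be replaced by the unconditional $I(X_T;Y_T)$; without this ingredient one would be left with a strictly larger auxiliary-variable expression and the claimed single-letter formula would fail.
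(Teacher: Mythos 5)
The paper does not actually prove Theorem~\ref{thm6}: it is imported verbatim as a combination of Theorems 2 and 3 of the Cuff--Permuter--Cover coordination-capacity paper, so there is no in-paper argument to compare against. Your proposal is a correct reconstruction of the standard proof from that reference, and every step checks out: the covering-lemma achievability with an unused $K$; the chain $nR \geq H(J) \geq H(J|K) \geq I(X^n;J|K) = I(X^n;J,K) \geq I(X^n;Y^n)$ using $X^n \perp K$ and the Markov chain $X^n-(J,K)-Y^n$; the i.i.d.\ single-letterization; the identity $P_{X_T,Y_T}=\cE[p_{X^n,Y^n}]$ combined with convexity of total variation and bounded convergence; and continuity of mutual information on the finite simplex. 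You are also right that $X_T \perp T$ is the load-bearing fact in the time-sharing step. The one loose end is the boundary: your achievability argument covers only $R > I_\pi(X;Y)$, whereas the theorem asserts that $\C_e$ itself (not its closure) contains the ray $R = I_\pi(X;Y)$. This is patched by the usual perturbation argument for empirical coordination: for any $\delta>0$ pick $\pi_\delta = (1-\delta)\pi + \delta(\mu\times\psi)$, which satisfies $I_{\pi_\delta}(X;Y) \leq (1-\delta)I_\pi(X;Y) < R$ by convexity of mutual information in the channel, run your code for $\pi_\delta$, and note that the resulting empirical distribution is within $\delta+\varepsilon$ of $\pi$; diagonalizing over $\delta \to 0$ gives the boundary point. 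With that sentence added, your proof is complete and self-contained, which is arguably more informative than the paper's bare citation. It is also worth noting that the paper's own converse for the related Theorem~\ref{thm3} (Appendix~C) extracts a convergent subsequence of $P_{X_{Q_n},Y_{Q_n}}$ by compactness because there only the $Y$-marginal is known to converge; in your setting the full joint $P_{X_T,Y_T}$ converges to $\pi$, so your direct limit is legitimate and cleaner.
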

Using the above theorem and the arguments in \cite[Section VII]{Cuf10}, one can show that the set of empirically achievable rate pairs $(R,R_c)$ at the distortion level $D$ can be described as:

\begin{theorem}
\label{thm3}
For any $D\geq0$ we have
\begin{align}
\cR_e(D,0)   &= \L(D,\infty), \nonumber \\
\cR_e(D,R_c) &= \cR_e(D,0) \text{  for all } R_c. \label{eq18}
\end{align}
In other words, $\cR_e(D) = \L(D,\infty) \times [0,\infty)$.
\end{theorem}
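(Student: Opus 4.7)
The plan is to prove the two inclusions $\L(D,\infty)\times[0,\infty)\subseteq\cR_e(D)$ (achievability) and $\cR_e(D)\subseteq\L(D,\infty)\times[0,\infty)$ (converse) separately. The forward direction will follow quickly from Theorem~\ref{thm6} applied to an auxiliary joint distribution, while the converse will use the standard information-theoretic single-letterization followed by a compactness argument on the finite probability simplex.

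For achievability, I would fix any $P_{X,Y}\in\G(D)$ with $I(X;Y)\le R$, set $\pi:=P_{X,Y}$, and apply Theorem~\ref{thm6} with this $\pi$ to obtain a sequence of $(n,R,0)$ codes using no common randomness such that $\|p_{X^n,Y^n}-\pi\|_{TV}\to 0$ in probability. Since the $Y$-marginal of $p_{X^n,Y^n}$ is $p_{Y^n}$ and that of $\pi$ is $\psi$, we immediately get $\|p_{Y^n}-\psi\|_{TV}\to 0$ in probability. For the distortion, writing $\rho_n(X^n,Y^n)=\sum_{x,y}p_{X^n,Y^n}(x,y)\rho(x,y)$, boundedness of $\rho$ on the finite alphabet combined with the in-probability convergence and bounded convergence give $\cE[\rho_n(X^n,Y^n)]\to\cE_\pi[\rho(X,Y)]\le D$. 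Thus $(R,0)\in\cR_e(D)$, and $(R,R_c)\in\cR_e(D)$ for every $R_c\ge 0$ since extra common randomness can simply be ignored.

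For the converse, take any code sequence achieving $(R,R_c)\in\cR_e(D)$ and bound
\begin{align*}
nR \ge H(J) \ge H(J|K) \ge I(X^n;J|K) = I(X^n;J,K) \ge I(X^n;Y^n),
\end{align*}
using independence of $K$ from $X^n$ and data processing via the Markov chain $X^n-(J,K)-Y^n$. The i.i.d.\ structure of $X^n$ under $\mu$ then yields the single-letterization $I(X^n;Y^n)\ge\sum_{i=1}^n I(X_i;Y_i)$, obtained from $I(X^n;Y^n)=\sum_i H(X_i)-H(X^n|Y^n)\ge\sum_i H(X_i)-\sum_i H(X_i|Y^n)$. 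Introducing a time-sharing variable $Q$ uniform on $\{1,\ldots,n\}$ independent of everything and letting $P^{(n)}$ be the joint law of $(X_Q,Y_Q)$, the identity $P_{X_Q}=\mu$ gives $\frac{1}{n}\sum_i I(X_i;Y_i)=I(X_Q;Y_Q|Q)\ge I(X_Q;Y_Q)$, so $R\ge I_{P^{(n)}}(X;Y)$. Under $P^{(n)}$, the $X$-marginal equals $\mu$, the $Y$-marginal $P_Y^{(n)}(y)=\cE[p_{Y^n}(y)]$ converges to $\psi(y)$ by bounded convergence applied to the in-probability empirical convergence, and $\cE_{P^{(n)}}[\rho]=\cE[\rho_n(X^n,Y^n)]$ has $\limsup_n\cE_{P^{(n)}}[\rho]\le D$.

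Finally I would extract a subsequential limit $P^{(n_k)}\to P^*$ using compactness of the probability simplex on the finite alphabet $\sX\times\sY$. Continuity of the linear marginal and distortion constraints together with continuity of mutual information on this simplex forces $P^*\in\G(D)$ and $I_{P^*}(X;Y)\le R$, giving $R\ge I(\mu\|\psi,D)$ and thus $(R,R_c)\in\L(D,\infty)\times[0,\infty)$. The only nontrivial step is converting the in-probability empirical convergence $\|p_{Y^n}-\psi\|_{TV}\to 0$ into the \emph{exact} single-letter identity $P_Y^*=\psi$ in the limit, which bounded convergence handles cleanly on the finite alphabet. The operational message of the theorem is then the striking fact that relaxing the strict output distribution constraint to empirical convergence collapses the role of common randomness entirely, so that even $R_c=0$ matches the unlimited-common-randomness rate $I(\mu\|\psi,D)$.
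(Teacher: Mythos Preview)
Your proposal is correct and follows essentially the same route as the paper's proof in Appendix~\ref{sec6sub2a}: Theorem~\ref{thm6} for the achievability direction (noting that empirical coordination requires no common randomness, so $(n,R,0)$ codes suffice), and for the converse the chain $nR\ge H(J)\ge I(X^n;Y^n)\ge\sum_i I(X_i;Y_i)$, a time-sharing variable $Q_n$, the identity $P_{Y_{Q_n}}=\cE[p_{Y^n}]\to\psi$, and a compactness/continuity argument on the finite simplex to produce a limiting $P_{\hat X,\hat Y}\in\G(D)$ witnessing $R\ge I(\hat X;\hat Y)$. The only cosmetic differences are that the paper expands $I(X^n;Y^n)$ via the chain rule in the $X_i$'s while you use the entropy decomposition $H(X^n)=\sum_i H(X_i)$, and that you make the intermediate steps $H(J|K)\ge I(X^n;J|K)=I(X^n;J,K)$ explicit whereas the paper writes $H(J)\ge I(X^n;Y^n)$ directly.
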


The proof of Theorem~\ref{thm3} is given in Appendix~\ref{sec6sub2a}.
Note that (\ref{eq18}) states that unlike in the original problem defined in Section~\ref{sec2}, here common randomness cannot decrease the necessary coding rate.

\subsection{Deterministic-Decoder Rate Region}
\label{sec3sub2}

In this section we investigate the effect on the rate region of private
randomness used by the decoder. Namely, we determine the achievable rate region
for a randomized source code having no (private) randomness at the decoder,
i.e., when the decoder $F$ is a deterministic function of random variables $J$
and $K$. We call such a code a \emph{randomized source code with
    deterministic decoder}. In this setup, since the encoder can reconstruct
the output $Y^n$ of the decoder by reading off $J$ and $K$, the common
randomness $K$ may be interpreted as feedback from the output of the decoder to
the encoder \cite[p.\ 5]{Win02}.

\begin{definition}
  For any positive real number $D$ and desired output distribution $\psi$, the
  pair $(R,R_c)$ is said to be \emph{achievable with a
      deterministic decoder} if there exists a sequence of $(n,R,R_c)$
  randomized source codes with  a deterministic decoder such
  that
\begin{align}
&\limsup_{n\rightarrow\infty}\cE[\rho_n(X^n,Y^n)] \leq D,\nonumber \\
&\lim_{n\rightarrow\infty}\|P_{Y^n}-\psi^n\|_{TV} = 0. \label{eq20}
\end{align}
\end{definition}

Note that here we relax the strict i.i.d.\ output distribution
  constraint, because without private randomness at the decoder, some output
  distributions cannot be exactly achieved for finite rates $(R,R_c)$. Indeed,
  this is the case when the probabilities of the
  output distribution are irrational and the input distribution
  has rational probabilities.

  For any $D\geq0$ we let $\cR_{dd}(D)$ denote the set of all achievable
  $(R,R_c)$ pairs with deterministic decoder. The following theorem, proved in
  Appendix~\ref{sec6sub3}, characterizes the closure of this set.

\begin{theorem}
\label{thm4}
For any $D\geq0$,
\begin{align}
\cl\cR_{dd}(D) &= \left\{ \begin{array}{rcl}
(R,R_c)\in\R^2 &:& \exists P_{X,Y} \in \G(D) \text{ s.t.} \\
R &\geq &I(X;Y), \\
R+R_c &\geq &H(Y)\end{array}\right\}. \label{eq21}
\end{align}
\end{theorem}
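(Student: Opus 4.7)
The plan is to derive achievability by specializing Cuff's distributed channel synthesis result (Theorem~\ref{thm5}) and to prove the converse via a standard single-letterization adapted to the deterministic-decoder constraint.

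For the achievability, I would fix $P_{X,Y}\in\G(D)$ with $R>I(X;Y)$ and $R+R_c>H(Y)$ and apply Theorem~\ref{thm5} to $\pi=P_{X,Y}$ with the choice $U=Y$. Then $X-U-Y$ holds trivially, the cardinality bound $|\sU|=|\sY|\leq|\sX\|\sY|+1$ is satisfied, and Cuff's rate conditions reduce to $R\geq I(X;U)=I(X;Y)$ and $R+R_c\geq I(X,Y;U)=H(Y)$. The crucial observation is that when $U=Y$ the decoder channel $P_{Y|U}$ is the identity, so the scheme of \cite{Cuf13} outputs $Y^n=U^n(J,K)$, i.e., simply reads off the $(J,K)$-th codeword of the random codebook, which is a deterministic function of $(J,K)$. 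Hence Cuff's code is in fact a valid randomized source code with deterministic decoder, and Theorem~\ref{thm5} gives $\|P_{X^n,Y^n}-\pi^n\|_{TV}\to 0$. Marginalization yields $\|P_{Y^n}-\psi^n\|_{TV}\to 0$, and the boundedness of $\rho$ on finite alphabets yields $\limsup_n\cE[\rho_n(X^n,Y^n)]\leq\cE_\pi[\rho]\leq D$, establishing achievability.

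For the converse, I would fix a sequence of $(n,R,R_c)$ codes with deterministic decoder satisfying $\cE[\rho_n(X^n,Y^n)]\leq D+\varepsilon_n$ and $\|P_{Y^n}-\psi^n\|_{TV}\to 0$. Since $Y^n=F(J,K)$ is deterministic in $(J,K)$, $H(Y^n)\leq H(J)+H(K)\leq n(R+R_c)+O(1)$, while Fannes-type entropy continuity on the finite alphabet gives $\frac{1}{n}H(Y^n)\to H(\psi)$; hence $R+R_c\geq H(\psi)$. For the coding rate, the chain
\[
nR\geq H(J)\geq H(J|K)\geq I(X^n;J|K)\geq I(X^n;Y^n|K)=I(X^n;Y^n),
\]
where the last equality uses $K\perp X^n$, combined with single-letterization via a time-sharing variable $T\sim\sUnif\{1,\ldots,n\}$, the independence of the $X_i$'s, and convexity of mutual information in $P_{Y|X}$ for the fixed input $\mu$, yields $\frac{1}{n}I(X^n;Y^n)\geq I(X_T;Y_T)$. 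Since $P_{X_T}=\mu$, $P_{Y_T}=\frac{1}{n}\sum_iP_{Y_i}\to\psi$ in TV, and $\cE[\rho(X_T,Y_T)]\leq D+\varepsilon_n$, compactness of the simplex on $\sX\times\sY$ together with lower semi-continuity of mutual information yields a subsequential limit $P^*_{X,Y}\in\G(D)$ under which $I(X;Y)\leq R$ and $H(Y)=H(\psi)\leq R+R_c$, completing the converse.

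The main delicate point is ensuring that the limiting single-letter distribution $P^*_{X,Y}$ simultaneously attains $P^*_Y=\psi$ exactly (not just approximately) while respecting both rate inequalities and the distortion constraint. This is resolved by noting $\|P_{Y_i}-\psi\|_{TV}\leq\|P_{Y^n}-\psi^n\|_{TV}\to 0$ uniformly in $i$, so the Ces\`aro average $\frac{1}{n}\sum_iP_{Y_i}$ inherits the limit $\psi$ along any subsequential compactness argument on the finite simplex, and lower semi-continuity then transports the mutual information bound to the limit point.
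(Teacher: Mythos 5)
Your proposal is correct and follows essentially the same route as the paper: the paper's achievability simply cites the deterministic-decoder channel-synthesis result (\cite[Theorem 1]{BeDeHaShWi13}, \cite[Section III-E]{Cuf13}), which is exactly what your $U=Y$ specialization of Theorem~\ref{thm5} reconstructs, and your converse (time-sharing variable, entropy continuity in total variation combined with $H(Y^n)\le n(R+R_c)$ for the sum rate, subsequential compactness on the finite simplex with $P_{Y_{Q_n}}\to\psi$) matches the paper's converse almost line by line. One cosmetic slip: $I(X^n;Y^n|K)=I(X^n;Y^n)$ should be the inequality $I(X^n;Y^n|K)=I(X^n;Y^n,K)\ge I(X^n;Y^n)$ (independence of $K$ and $X^n$ gives the equality with $I(X^n;Y^n,K)$, not with $I(X^n;Y^n)$), but the direction you need holds, so nothing breaks.
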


\begin{remark}
\begin{itemize}
\item [ ]
\item[(a)] Note that the rate region in Theorem~\ref{thm4} can equivalently be given by
\begin{align}
&\cl\cR_{dd}(D) \nonumber\\
&\phantom{xx}= \left\{ \begin{array}{rcl} (R,R_c)\in\R^2 &:& \exists P_{X,Y,U} \in \M(D) \text{ s.t.} \\
R &\geq &I(X;U), \\
R+R_c &\geq &H(Y) \end{array} \right\}. \label{eq22}
\end{align}
Therefore, $\L(D) \supset \cl\cR_{dd}(D)$.
\item[(b)] It is important to note that if we allow the decoder to use private randomness while preserving the output distribution constraint (\ref{eq20}), one can prove that the resulting achievable rate region is $\L(D)$. In this case, the only part to prove is the converse, since the achievability is obvious. However, the converse can be proven by using a similar technique as in \cite[Section VI]{Cuf13}. Hence, if we allow the decoder to use private randomness, replacing the strict output distribution constraint in the Definition~\ref{def1} with (\ref{eq20}) does not change the achievable rate region.

\item[(c)] Since $\L(D) \supset \cl\cR_{dd}(D)$, where the inclusion is strict
  in general, private randomness can indeed replaces a part of the common
  randomness to decrease the necessary coding rate  when the common randomness
  rate is less than $R_c^{\min}$.
\end{itemize}
\end{remark}

\section{Proof of Theorem~\ref{thm2}}
\label{sec_proof}

Our proof relies on techniques developed by Cuff in \cite{Cuf13}. In particular, in the achievability part, we apply the `likelihood encoder' of \cite{Cuf10,Cuf13} which is an elegant alternative to the standard random coding argument. The converse part of the proof is an appropriately modified version of the converse argument in \cite{Cuf13}; however, in our setup this technique also works in the continuous alphabet case, while in \cite{Cuf13} the finite alphabet assumption seem quite difficult to relax.

\subsection{Achievability for Discrete Alphabets}
\label{sec achiev}

Assume that $(R,R_c)$ is in the interior of $\L(D)$. Then there exists $P_{X,Y,U} \in \M(D)$ such that $R>I(X;U)$ and $R+R_c>I(Y;U)$. The method used in this part of the proof comes from \cite[Section V]{Cuf13} where instead of explicitly constructing the encoder-decoder pair, a joint distribution was constructed from which the desired encoder-decoder behavior is established.

In this section, distributions which depend on realizations of some random variable (e.g., random codebook) will be denoted as bold upper case letters, but without referring to the corresponding realization for notational simplicity.

For each $n$, generate a random `codebook' $\C_n \coloneqq \bigl\{U^n(j,k)\bigr\}$ of $u^n$ sequences independently drawn from $P_{U}^n$ and indexed by $(j,k)\in [2^{nR}] \times [2^{nR_c}]$. For each realization $\{u^n(j,k)\}$ of $\C_n$, define a distribution $\mathbf{\Gamma}_{X^n,Y^n,J,K}$ such that $(J,K)$ is uniformly distributed on $[2^{nR}] \times [2^{nR_c}]$ and $(X^n,Y^n)$ is the output of the stationary and memoryless channel $P_{X,Y|U}^n$ when we feed it with $u^n(J,K)$, i.e.,
\begin{align}
\mathbf{\Gamma}_{X^n,Y^n,J,K}(x^n,y^n,j,k) \coloneqq \frac{1}{2^{n(R+R_c)}} P_{X,Y|U}^n(x^n,y^n|u^n(j,k)). \label{aeq1}
\end{align}
Here, $\{\mathbf{\Gamma}_{X^n,Y^n,J,K}\}_{n\geq1}$ are the distributions from which we derive a sequence of encoder-decoder pairs which for all $n$ large enough \emph{almost} meet the requirements in Definition~\ref{def1}.

\begin{lemma}[Soft covering lemma{ \cite[Lemma IV.1]{Cuf13}}]
Let $P_{V,W}=P_V P_{W|V}$ be the joint distribution of some random vector $(V,W)$ on $\sV \times \sW$, where $P_V$ is the marginal on $\sV$ and $P_{W|V}$ is the conditional probability on $\sW$ given $\sV$. For each $n$, generate the set $\B_n = \bigl\{V^n(i)\bigr\}$ of $v^n$ sequences independently drawn from $P_V^n$ and indexed by $i \in [2^{nR}]$. Let us define a random measure on $\sW^n$ as
\begin{align}
\mathbf{P}_{W^n}(w^n) \coloneqq \frac{1}{|\B_n|} \sum_{i=1}^{|\B_n|} P_{W^n|V^n}(w^n|V^n(i)), \nonumber
\end{align}
where $P_{W^n|V^n} = \prod_{i=1}^n P_{W|V}$. If $R \geq I(V;W)$, then we have
\begin{align}
\cE_{\B_n} \bigl[\| \mathbf{P}_{W^n} - P_W^n \|_{TV} \bigr] \leq \frac{3}{2} \exp\{- \kappa n\}, \nonumber
\end{align}
for some $\kappa > 0$.
\end{lemma}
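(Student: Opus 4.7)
The plan is to establish this classical channel resolvability bound by combining a typical-set decomposition with a multiplicative Chernoff (Bernstein) concentration inequality, in the spirit of Wyner's resolvability argument.

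First I would write $\|\mathbf{P}_{W^n} - P_W^n\|_{TV} = \tfrac{1}{2}\sum_{w^n} |\mathbf{P}_{W^n}(w^n) - P_W^n(w^n)|$ and split the sum according to whether $w^n$ lies in an $\eps$-typical set $\mathcal{A}_\eps^n$ of $P_W^n$. The contribution from atypical $w^n$ is controlled by the standard estimate $P_W^n(\sW^n \setminus \mathcal{A}_\eps^n) \leq e^{-cn}$, which bounds both the $P_W^n$-mass and, after taking expectation over $\B_n$, the mass of $\cE_{\B_n}[\mathbf{P}_{W^n}]$ on that set.

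For typical $w^n$, I would introduce the truncated kernel $\tilde{P}(w^n|v^n) \coloneqq P_{W^n|V^n}(w^n|v^n)\,\mathbf{1}_{\{(v^n,w^n) \in \mathcal{T}_\eps^n\}}$, where $\mathcal{T}_\eps^n$ is the joint typical set for $P_{V,W}$. By joint typicality one has $\tilde{P}(w^n|v^n) \leq 2^{-n(H(W|V) - \eps)}$ and $P_W^n(w^n) \geq 2^{-n(H(W) + \eps)}$, so each truncated term is at most $2^{n(I(V;W) + 2\eps)}\cdot P_W^n(w^n)$; its expectation equals $P_W^n(w^n)$ up to a correction of order $\sPr\{(V^n,W^n) \notin \mathcal{T}_\eps^n\}$, which is exponentially small. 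A multiplicative Chernoff bound applied to the i.i.d.\ average $\frac{1}{2^{nR}}\sum_{i=1}^{2^{nR}} \tilde{P}(w^n|V^n(i))$ then gives a deviation probability of at most $\exp\bigl\{-c\cdot 2^{n(R - I(V;W) - 2\eps)}\bigr\}$ for each typical $w^n$.

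To conclude, I would union-bound over the at most $2^{n(H(W) + \eps)}$ typical sequences, add the exponentially small residual arising from the truncation, and take expectation with respect to $\B_n$ to obtain the claimed bound $\tfrac{3}{2}\exp\{-\kappa n\}$. The main obstacle is the coordination of the $\eps$-parameters: $\eps$ must be small enough that $R - I(V;W) - 2\eps > 0$, so that the Chernoff exponent $2^{n(R - I(V;W) - 2\eps)}$ dominates the cardinality growth $2^{n(H(W)+\eps)}$ of the typical set, yet large enough that all of the truncation residuals and tail probabilities are themselves exponentially small. The strict rate gap implicit in the hypothesis (interpreted as $R > I(V;W)$, which is what is invoked in the achievability argument for the interior of $\L(D)$) provides exactly the slack needed to close this balance and yield a uniform $\kappa > 0$.
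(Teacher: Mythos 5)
The paper does not actually prove this lemma: it is imported verbatim from \cite{Cuf13} (Lemma IV.1 there), so the only meaningful comparison is with Cuff's proof. Your strategy is a valid but genuinely different route. Cuff bounds the \emph{expected} total variation directly: he splits each term $P_{W^n|V^n}(w^n|V^n(i))$ according to joint typicality of $(V^n(i),w^n)$, disposes of the atypical part by its (exponentially small) expected mass, and controls the typical part via a second-moment argument, $\cE|Z-\cE Z|\le\sqrt{\mathrm{Var}(Z)}$ plus Jensen to pull the sum over $w^n$ inside the square root; the variance of the average of $2^{nR}$ i.i.d.\ terms bounded by $2^{n(I(V;W)+2\eps)}P_W^n(w^n)$ gives the exponent $\tfrac12(R-I(V;W)-2\eps)$ and the constant $\tfrac32$. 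You instead fix each typical $w^n$, apply a multiplicative Chernoff/Bernstein bound to the truncated average, and union-bound over the $\le 2^{n(H(W)+\eps)}$ typical sequences. This works, and in fact proves more (doubly exponential concentration of the TV distance around its mean, the ``stronger'' soft covering lemma), at the cost of more bookkeeping: to get the TV itself to decay exponentially you must take the relative deviation $\delta$ in the Chernoff bound to be exponentially small, say $\delta=2^{-n\gamma}$ with $2\gamma<R-I(V;W)-2\eps$, and you must handle typical $w^n$ whose truncated mean is small by bounding absolute rather than relative deviations (Bernstein with variance proxy $2^{-n(H(W|V)-\eps)}\cE[\tilde P(w^n|V^n)]$); you should state these choices explicitly. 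Two further points. First, your reading of the hypothesis as $R>I(V;W)$ is the right one --- with $R=I(V;W)$ exactly, neither your exponent $2^{n(R-I-2\eps)}$ nor Cuff's $2^{-\frac n2(R-I-2\eps)}$ survives, and in the paper the lemma is only ever invoked with strict inequalities $R>I(X;U)$ and $R+R_c>I(Y;U)$. Second, your argument as written requires finite alphabets (for the union bound over typical sequences and the lower bound $P_W^n(w^n)\ge 2^{-n(H(W)+\eps)}$); this is consistent with where the paper uses the lemma (the discrete achievability proof of Section \ref{sec achiev}), but it is a restriction Cuff's second-moment proof shares only in milder form.
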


Since $R+R_c > I(Y;U)$, by the soft covering lemma
\begin{align}
\cE_{\C_n}\bigl[\|\mathbf{\Gamma}_{Y^n} - P_{Y}^n\|_{TV}\bigr] &\leq \frac{3}{2} \exp{\{-cn\}} \label{aeq3},
\end{align}
where $c> 0$ and $\cE_{\C_n}$ denotes expectation with respect to the distribution of $\C_n$. Note that for any fixed $k$, the collection $\C_n(k) \coloneqq \{U^n(j,k)\}_j$ is a random codebook of size $2^{nR}$. Since $R > I(X;U)$, the soft covering lemma again gives
\begin{align}
\cE_{\C_n(k)} \bigl[\| \mathbf{\Gamma}_{X^n|K=k} - P_{X}^n  \|_{TV} \bigr] \leq \frac{3}{2} \exp{\{-dn\}} \label{aeq4},
\end{align}
where $d > 0$ (same for all $k$) and $\cE_{\C_n(k)}$ denotes expectation with respect to the distribution of $\C_n(k)$. Then, by the definition of total variation, we have
\begin{align}
&\cE_{\C_n}\bigl[ \|\mathbf{\Gamma}_{X^n,K} - \frac{1}{2^{nR_c}}P_{X}^n\|_{TV} \bigr] \nonumber \\
& \phantom{xxxxx}\coloneqq\cE_{\C_n}\biggl[ \frac{1}{2}\sum_{x^n,k} \bigl| \mathbf{\Gamma}_{X^n,K}(x^n,k) - \frac{1}{2^{nR_c}} P_{X}^n(x^n) \bigr| \biggr] \nonumber \\
&\phantom{xxxxx} =\frac{1}{2^{nR_c}} \cE_{\C_n}\biggl[ \frac{1}{2}\sum_{x^n,k} \bigl| \mathbf{\Gamma}_{X^n|K}(x^n|k) - P_{X}^n(x^n) \bigr| \biggr]  \nonumber \\
&\phantom{xxxxx}= \frac{1}{2^{nR_c}} \sum_k \cE_{\C_n(k)}\bigl[ \| \mathbf{\Gamma}_{X^n|K=k} - P_{X}^n \|_{TV} \bigr] \nonumber \\
&\phantom{xxxxx} \leq\frac{3}{2} \exp{\{-dn\}}. \label{aeq5}
\end{align}

Furthermore, the expected value (taken with respect to the distribution of $\C_n$) of the distortion induced by $\mathbf{\Gamma}_{X^n,Y^n}$ is upper bounded by $D$ as a result of the symmetry in the construction of $\C_n$, i.e.,
\begin{align}
&\cE_{\C_n}\biggl[\sum_{x^n,y^n} \rho_n(x^n,y^n) \mathbf{\Gamma}_{X^n,Y^n}(x^n,y^n)\biggr] \nonumber \\
&\phantom{xxx}=\cE_{\C_n}\biggl[\sum_{j,k} \sum_{x^n,y^n} \rho_n(x^n,y^n) \mathbf{\Gamma}_{X^n,Y^n,J,K}(x^n,y^n,j,k)\biggr] \nonumber \\
&\phantom{xxx}=\sum_{x^n,y^n} \rho_n(x^n,y^n) \sum_{j,k} \cE_{\C_n}\biggl[\mathbf{\Gamma}_{X^n,Y^n,J,K}(x^n,y^n,j,k)\biggr] \nonumber \\
&\phantom{xxx}=\sum_{x^n,y^n} \rho_n(x^n,y^n) P_{X,Y}^n(x^n,y^n) \leq D, \label{aeq20}
\end{align}
where the last equality follows from the symmetry and the independence in the codebook construction, and the last inequality follows from the definition of $\M(D)$.

Now, since $\mathbf{\Gamma}_{Y^n,J|X^n,K} = \mathbf{\Gamma}_{J|X^n,K} \mathbf{\Gamma}_{Y^n|J,K}$, we define a randomized  $(n,R,R_c)$ source code such that it has the encoder-decoder pair $(\mathbf{\Gamma}_{J|X^n,K},\mathbf{\Gamma}_{Y^n|J,K})$.
Hence, $(n,R,R_c)$ depends on the realization of $\C_n$. Let $\mathbf{P}_{X^n,Y^n,J,K}$ denote the distribution induced by $(n,R,R_c)$, i.e.,
\begin{align}
&\mathbf{P}_{X^n,Y^n,J,K}(x^n,y^n,j,k) \nonumber \\
&\phantom{xxxxxxxxxxxx}\coloneqq \frac{1}{2^{nR_c}} P_X^n(x^n) \mathbf{\Gamma}_{Y^n,J|X^n,K}(y^n,j|x^n,k). \nonumber
\end{align}
If two distributions are passed through the same channel, then the total variation between the joint distributions is the same as the total variation between the input distributions \cite[Lemma V.2]{Cuf13}. Hence, by (\ref{aeq5})
\begin{align}
\cE_{\C_n} \biggl[ \|\mathbf{\Gamma}_{X^n,Y^n,K,J} - \mathbf{P}_{X^n,Y^n,K,J} \|_{TV} \biggr] \leq \frac{3}{2} \exp{\{-dn\}} \label{aeq6}.
\end{align}
Then, (\ref{aeq20}) and (\ref{aeq6}) give
\begin{align}
\cE_{\C_n} \biggl[ \sum_{x^n,y^n} \rho_n(x^n,y^n) \mathbf{P}_{X^n,Y^n}(x^n,y^n) \biggr] &\leq D + \alpha \exp{\{-dn\}}, \label{aeq7}
\end{align}
where $\alpha = \rho_{\max} \frac{3}{2}$. By virtue of the properties of total variation distance, (\ref{aeq3}) and (\ref{aeq6}) also imply
\begin{align}
&\cE_{\C_n} \bigl[ \| \mathbf{P}_{Y^n} - P_{Y}^n \|_{TV} \bigr] \nonumber \\
&\phantom{xxxxx}\leq  \cE_{\C_n} \bigl[ \| \mathbf{P}_{Y^n} - \mathbf{\Gamma}_{Y^n} \|_{TV} \bigr] + \cE_{\C_n} \bigl[ \| \mathbf{\Gamma}_{Y^n} - P_{Y}^n \|_{TV} \bigr] \nonumber \\
&\phantom{xxxxx}\leq \frac{3}{2} \exp{\{-dn\}} + \frac{3}{2} \exp{\{-cn\}}  \nonumber \\
&\phantom{xxxxx}= \alpha_n \exp{\{-dn\}} \label{aeq8},
\end{align}
where (without any loss of generality) we assumed  $d < c$ and where $\alpha_n \coloneqq \frac{3}{2}\bigl(1+\exp{\{-(c-d)n\}}\bigr) \leq 2$ if $n$ is large enough.

Define the following functions of the random codebook $\C_n$:
\begin{align}
D(\C_n) &\coloneqq \sum_{x^n,y^n} \rho_n(x^n,y^n) \mathbf{P}_{X^n,Y^n}(x^n,y^n), \nonumber \\
G(\C_n) &\coloneqq \|\mathbf{P}_{Y^n} - P_{Y}^n\|. \nonumber
\end{align}
Thus, the expectations of $D(\C_n)$ and $G(\C_n)$ satisfy (\ref{aeq7}) and (\ref{aeq8}), respectively. For any $\delta \in (0,d)$, Markov's inequality gives
\begin{align}
\sPr\biggr\{ G(\C_n) \leq \exp{\{-\delta n\}} \biggl\} &\geq 1-\frac{\alpha_n \exp{\{-dn\}}}{\exp{\{-\delta n\}}}, \label{aeq9} \\
\sPr\biggr\{ D(\C_n) \leq D+\delta \biggl\} &\geq 1-\frac{D+\alpha \exp{\{-dn\}}}{D+\delta} \label{aeq10}.
\end{align}
Since
\begin{align}
&\lim_{n\rightarrow\infty} \biggl(2 - \frac{\alpha_n \exp{\{-dn\}}}{\exp{\{-\delta n\}}} - \frac{D+\beta \exp{\{-dn\}}}{D+\delta} \biggl) \nonumber \\
&\phantom{xxxxxxxxxxxxxxxxxx}= 2 - \frac{D}{D+\delta} > 1, \nonumber
\end{align}
there exists a positive $N(\delta)$ such that for $n \geq N(\delta)$,  we have
\begin{align}
\sPr \biggl\{ \biggl(D(\C_n) \leq D+\delta\biggr) \bigcap \biggl(G(\C_n) \leq \exp{\{-\delta n\}}\biggr) \biggr\} > 0. \nonumber
\end{align}
This means that for each $n\geq N(\delta)$, there is a realization of $\C_n$ which gives
\begin{align}
\sum_{x^n,y^n} \rho_n(x^n,y^n) \mathbf{P}_{X^n,Y^n}(x^n,y^n) &\leq D+\delta \label{aeq11} \\
\|\mathbf{P}_{Y^n} - P_{Y}^n\| &\leq \exp{\{-\delta n\}} \label{aeq12}.
\end{align}
Hence, the sequence of $(n,R,R_c)$ randomized source codes corresponding to these realizations almost satisfies the achievability constraints. Next we can slightly modify this coding scheme so that the code exactly satisfies the i.i.d. output distribution constraint $Y^n = \psi^n = P_Y^n$ while having distortion upper bounded by $D+\delta$.

Before presenting this modification, we pause to define the notion of optimal coupling and the optimal transportation cost as they will play an important role in the sequel. Let $\pi$, $\lambda$ be probability measures over finite or continuous alphabets $\sW$ and $\sV$, respectively. The optimal transportation cost $\hat{T}(\pi,\lambda)$ between $\pi$ and $\lambda$ (see, e.g., \cite{Vil09}) with respect to a cost function $c:\sV\times\sW\rightarrow[0,\infty)$ is defined by
\begin{align}
\hat{T}(\pi,\lambda) = \inf\bigl\{\cE[c(V,W)]: V \sim \pi, W \sim \lambda \bigr\}, \label{neq14}
\end{align}
where the infimum is taken over all joint distribution of pairs of random variables $(V,W)$ satisfying the given marginal distribution constraints. The distribution achieving $\hat{T}(\pi,\lambda)$ is called an optimal coupling of $\pi$ and $\lambda$. Somewhat informally, we also call the corresponding conditional probability on $\sW$ given $\sV$ an optimal coupling. Optimal couplings exist when $\sV=\sW$ are finite or when $\sV=\sW=\R$, $\rho(x,y)=(x-y)^2$, and both $\pi$ and $\lambda$ both have finite second moments \cite{Vil09}.

 Consider the $(n,R,R_c)$ randomized source code depicted in
  Fig.~\ref{fig4} which is obtained by augmenting the original $(n,R,R_c)$ code
  with the optimal coupling $T_{\hat{Y}^n|Y^n}$ between $\mathbf{P}_{Y^n}$ and
  $\psi^n$ with transportation cost $\hat{T}(\mathbf{P}_{Y^n},\psi^n)$ when the
  cost function is $\rho_n(y^n,\tilde{y}^n) = \frac{1}{n} \sum_{i=1}^n
  \rho(y_i,\tilde{y}_i) \coloneqq \frac{1}{n} \sum_{i=1}^n d(y_i,\tilde{y}_i)^p$,
  where $d$ is a metric on $\sY = \sX$. Note that
\begin{align}
d_n(y^n,\tilde{y}^n) \coloneqq \biggl(\sum_{i=1}^n d(y_i,\tilde{y}_i)^p\biggr)^{\frac{1}{q}}, \nonumber
\end{align}
where $q = \max\{1,p\}$, defines a metric on $\sY^n$. We have
\begin{eqnarray*}
\lefteqn{ \hat{T}(\mathbf{P}_{Y^n},\psi^n)}\\
& \coloneqq &\!\!\! \inf\biggl\{\cE\bigl[\rho_n(Y^n,\tilde{Y}^n)\bigr]: Y^n \sim \mathbf{P}_{Y^n}, \tilde{Y}^n \sim \psi^n \biggr\} \nonumber \\
&= &  \!\!\! \frac{1}{n} \inf\biggl\{\cE\biggl[\sum_{i=1}^n d(Y_i,\tilde{Y}_i)^p\biggr]: Y^n \sim \mathbf{P}_{Y^n}, \tilde{Y}^n \sim \psi^n \biggr\} \nonumber \\
&= &  \!\!\! \frac{1}{n} \biggl( \inf\biggl\{\cE\biggl[\sum_{i=1}^n d(Y_i,\tilde{Y}_i)^p\biggr]^{\frac{1}{q}}: Y^n \sim \mathbf{P}_{Y^n}, \tilde{Y}^n \sim \psi^n \biggr\}\biggr)^{q} \nonumber \\
&= & \!\!\! \frac{1}{n} \bigl( W_q(\mathbf{P}_{Y^n},\psi^n) \bigr)^q, \nonumber
\end{eqnarray*}
where $W_q$ denotes the Wasserstein distance of order $q$ \cite[Definition 6.1]{Vil09}.
\begin{figure}[h]
\centering
\tikzstyle{int}=[draw, fill=white!20, minimum size=3em]
\tikzstyle{init} = [pin edge={to-,thin,black}]
\scalebox{0.7}{
\begin{tikzpicture}[node distance=3cm,auto,>=latex']
    \node [int] (a) {$\mathbf{\Gamma}_{J|X^n,K}$};
    \node (b) [left of=a,node distance=3cm, coordinate] {a};
    \node [int] (c) [right of=a] {$\mathbf{\Gamma}_{Y^n|J.K}$};
    \node [int] (d) [right of=c,node distance=3cm] {$T_{\hat{Y}^n|Y^n}$};
    \node [coordinate] (end) [right of=d, node distance=3cm]{};
    \path[->] (b) edge node {$X^n\sim\mu^n$} (a);
    \path[-] (a) edge node {$J$} (c);
    \path[-] (c) edge node {$Y^n$} (d);
    \draw[->] (d) edge node {$\hat{Y}^n\sim\psi^{n}$} (end) ;
\end{tikzpicture}}
\caption{Randomized source code used in the achievability
    proof  for discrete alphabets.}
\label{fig4}
\end{figure}
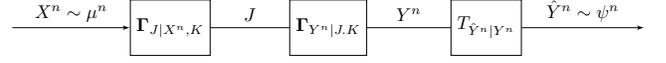
Using \cite[Theorem 6.15]{Vil09}, we obtain for arbitrary fixed $y_0^n \in \sY^n$ and
$r$  such that  $\frac{1}{q} + \frac{1}{r} = 1$,
\begin{align}
W_{q}(\mathbf{P}_{Y^n},\psi^n) &\leq 2^{\frac{1}{r}} \biggl( \sum_{y^n} d_n(y_0^n,y^n)^q \bigl|\mathbf{P}_{Y^n}(y^n) - \psi^n(y^n)\bigr| \biggr)^{\frac{1}{q}} \nonumber \\
&=  2^{\frac{1}{r}} \biggl( \sum_{y^n} \sum_{i=1}^n \rho(y_{0,i},y_i) \bigl|\mathbf{P}_{Y^n}(y^n) - \psi^n(y^n)\bigr| \biggr)^{\frac{1}{q}} \nonumber \\
&\leq 2^{\frac{1}{r}} \bigl( n \rho_{\max} \|\mathbf{P}_{Y^n} - \psi^n\|_{TV} \bigr)^{\frac{1}{q}} \nonumber \\
&\leq 2^{\frac{1}{r}} \bigl(n \rho_{\max} \exp{\{-\delta n\}} \bigr)^{\frac{1}{q}}, \nonumber \text{ by (\ref{aeq12})}.
\end{align}
Hence, we have
\begin{align}
\hat{T}_n(\mathbf{P}_{Y^n},\psi^n) \leq 2^{\frac{q}{r}} \rho_{\max} \exp{\{-\delta n\}}. \label{aeq13}
\end{align}
Recall that $\rho(x,y)=d(x,y)^p$ for some $p>0$. If $p\ge 1$, then   $\|V^n\|_p \coloneqq \big( E\big[\, \sum_{i=1}^n|V_i|^p\, \big] \big)^{1/p}$ is a norm on $\R^n$-valued random vectors whose components have finite $p$th moments, and if $1<p<0$, we still have
$\|U^n+V^n\|_p\le \|U^n\|_p+\|V^n\|_p$. Thus we can upper bound the distortion $
\cE[\rho_n(X^{n},\hat{Y}^{n})]$ of the code in Fig.~\ref{fig4} as follows:
\begin{align*}
&\biggl( \cE\biggl[ \frac{1}{n} \sum_{i=1}^n \rho(X_i,\hat{Y}_i)   \biggr] \biggr)^{1/q}
   = \biggl( \cE\biggl[ \frac{1}{n} \sum_{i=1}^n d(X_i,\hat{Y}_i)^p   \biggr] \biggr)^{1/q}
  \\*
&\le  \biggl( \cE\biggl[ \frac{1}{n} \sum_{i=1}^n d(X_i,Y_i)^p   \biggr]
\biggr)^{1/q} + \biggl( \cE\biggl[ \frac{1}{n} \sum_{i=1}^n d(Y_i,\hat{Y}_i)^p   \biggr] \biggr)^{1/q}
  \\
  &=   \Bigl(\cE[\rho_n(X^{n},Y^{n})]\Bigr)^{1/q} +
  \hat{T}_n(P_{Y^n},\psi^n)^{1/q},
\end{align*}
Hence, by (\ref{aeq11}) and (\ref{aeq13}) we obtain
\begin{align}
\limsup_{n\to \infty}  \cE[\rho_n(X^{n},\hat{Y}^{n})] \le D + \delta, \nonumber
\end{align}
which completes the proof.

\subsection{Achievability for Continuous Alphabets}
\label{sec achievcont}

In this section, we let $\sX = \sY = \R$, $\rho(x,y)=(x-y)^2$, and assume that $\mu$ and $\psi$ have finite second moments. We make use of the discrete case to prove the achievability for the continuous case.

Assume that $(R,R_c)$ is in the interior of $\L(D)$. Then there exists $P_{X,Y,U} \in \M(D)$ such that $R>I(X;U)$ and $R+R_c>I(Y;U)$. Let $q_k$ denote the uniform quantizer on the interval $[-k,k]$ having $2^k$ levels, the collection of which is denoted by $L_k$. Extend $q_k$ to the entire real line by using the nearest neighborhood encoding rule. Define $X(k) \coloneqq q_k(X)$ and $Y(k) \coloneqq q_k(Y)$. Let $\mu_k$ and $\psi_k$ denote the distributions of $X(k)$ and $Y(k)$, respectively. It is clear that
\begin{align}
\cE[(X-X(k))^2] \rightarrow 0, \text{  and  }  \cE[(Y-Y(k))^2] \rightarrow 0 \text{  as  } k\rightarrow\infty. \label{neq12}
\end{align}
Moreover, by \cite[Theorem 6.9]{Vil09} it follows that $\hat{T}(\mu_k,\mu) \rightarrow 0$ and $\hat{T}(\psi_k,\psi) \rightarrow 0$
as $k\rightarrow\infty$ since $\mu_k \rightarrow \mu$, $\psi_k \rightarrow \psi$ weakly \cite{Bil99}, and $\cE[X(k)^2]\rightarrow\cE[X^2]$, $\cE[Y(k)^2]\rightarrow\cE[Y^2]$. For each $k$ define $D_k \coloneqq \cE[(X(k)-Y(k))^2]$. Then by (\ref{neq12})
\begin{align}
\lim_{k\rightarrow\infty}  D_k = \cE[(X-Y)^2] \leq D. \nonumber
\end{align}

For any $k$, let $\M_k(D_k)$ be the set of distributions obtained by replacing $\mu$, $\psi$, and $\sX=\sY$ with $\mu_k$, $\psi_k$, and $\sX_k=\sY_k=L_k$, respectively, in (\ref{neq11}). Note that $X(k)-U-Y(k)$ and
\begin{align}
I(X(k);U) \leq I(X;U) \text{  and  } I(Y(k);U)\leq I(Y;U) \label{neq13}
\end{align}
by data processing inequality which implies $R>I(X(k);U)$ and $R+R_c>I(Y(k);U)$. Hence, $P_{X(k),Y(k),U} \in \M_k(D_k)$. Then, using the achievability result for discrete alphabets, for any $k$, one can find a sequence of $(n,R,R_c)^k$ randomized source codes for common source and reproduction alphabet $L_k$, source distribution $\mu_{k}$, and desired output distribution $\psi_{k}$ such that the upper limit of the distortions of these codes is upper bounded by $D_k$.

For each $k$ and $n$, consider the randomized source codes defined in Fig.~\ref{fig5}.
\begin{figure}[h]
\centering
\tikzstyle{int}=[draw, fill=white!20, minimum size=3em]
\tikzstyle{init} = [pin edge={to-,thin,black}]
\scalebox{0.6}{
\begin{tikzpicture}[node distance=3cm,auto,>=latex']
    \node [int] (a) {$T_{\mu_{k}^n|\mu^n}$};
    \node (b) [left of=a, coordinate] {a};
    \node [int] (c) [right of=a, node distance=4.2cm] {$(n,R,R_c)^k$};
    \node [int] (d) [right of=c,node distance=4.2cm] {$T_{\psi^n|\psi_{k}^n}$};
    \node [coordinate] (end) [right of=d]{};
    \path[->] (b) edge node {$\hat{X}^n(k)\sim\mu^n$} (a);
    \path[-] (a) edge node {$X^n(k)\sim\mu_{k}^n$} (c);
    \path[-] (c) edge node {$Y^n(k)\sim\psi_{k}^n$} (d);
    \draw[->] (d) edge node {$\hat{Y}^n(k)\sim\psi^{n}$} (end) ;
\end{tikzpicture}}
\caption{Randomized source code used in the achievability proof for continuous alphabets.}
\label{fig5}
\end{figure}
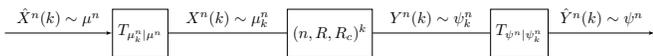
We note that the definition of the optimal transportation cost implies that $\hat{T}(\mu_{k}^n,\mu^n)\leq\hat{T}(\mu_k,\mu)$
and $\hat{T}(\psi_{k}^n,\psi^n)\leq\hat{T}(\psi_k,\psi)$. Hence, using the triangle inequality for the norm $\|V^n\|_2 \coloneqq\bigl( \sum_{i=1}^nE[ V_i^2] \bigl)^{1/2} $ on $\R^n$-valued random vectors having finite second moments, for all $k$, we have
\begin{align}
&\limsup_{n\rightarrow\infty} \cE\biggl[\bigl(\hat{X}^n(k)-\hat{Y}^n(k)\bigr)^2\biggr]^{1/2} \nonumber \\
&\phantom{xxxx}\leq \limsup_{n\rightarrow\infty} \biggl(\hat{T}(\mu_{k}^n,\mu^n)^{1/2}+\cE\biggl[\bigl(X^n(k)-Y^n(k)\bigr)^2\biggl]^{1/2}\nonumber \\
&\phantom{xxxxxxxxxxxxxxxxxxxxxxxxxxxx}+ \hat{T}(\psi_{k}^n,\psi^n)^{1/2}\biggr) \nonumber \\
&\phantom{xxxx}\leq \hat{T}(\mu_{k},\mu)^{1/2}+\hat{T}(\psi_{k},\psi)^{1/2}\nonumber\\
&\phantom{xxxxxxxxxxxxxxxx}+\limsup_{n\rightarrow\infty}\cE\biggl[\bigl(X^n(k)-Y^n(k)\bigr)^2\biggr]^{1/2} \nonumber \\
&\phantom{xxxx}\leq \hat{T}(\mu_{k},\mu)^{1/2}+\hat{T}(\psi_{k},\psi)^{1/2}+D_{k}^{1/2}. \nonumber
\end{align}
By choosing $k$ large enough we can make the last term arbitrarily close to $D$, which completes the proof.

\subsection{Cardinality Bound}
\label{sec card}

In this section, we show that for any discrete distribution $\Lambda_{X,Y,W}$ forming a Markov chain $X-W-Y$, there exists a discrete distribution $\Gamma_{X,Y,U}$ forming another Markov chain $X-U-Y$ such that
\begin{align}
|\sU| &\leq |\sX|+|\sY|+1, \nonumber \\
\Gamma_X &= \Lambda_X \nonumber \\
\Gamma_Y &= \Lambda_Y, \nonumber \\
\cE_{\Gamma}[\rho(X,Y)] &= \cE_{\Lambda}[\rho(X,Y)], \nonumber \\
I_{\Gamma}(X;U) &= I_{\Lambda}(X;W), \nonumber \\
I_{\Gamma}(Y;U) &= I_{\Lambda}(Y;W), \nonumber
\end{align}
where $I_{P}(X;U)$ denotes the mutual information computed with respect to the distribution $P$. Let $\P(\sX)\times\P(\sY)$ denote the product of probability simplices $\P(\sX)$ and $\P(\sY)$ representing the set of all distributions of independent random variables over $\sX\times\sY$. This set is compact and connected when viewed as a subset of $\R^{|\sX|+|\sY|}$. Without loss of generality $\sX=\{1,\ldots,|\sX|\}$ and $\sY=\{1,\ldots,|\sY|\}$. Since $H(X)$ is fixed in $I(X;W) = H(X) - H(X|W)$ (similarly $H(Y)$ is fixed in $I(Y;W) = H(Y) - H(Y|W)$), we define the following real valued continuous functions on $\P(\sX)\times\P(\sY)$:
\begin{align}
g_j(\nu)=\begin{cases}
\nu_x(j),   &\text{if } j=1,\ldots,|\sX|-1  \\
\nu_y(j),   &\text{if } j=|\sX|,\ldots,|\sX|+|\sY|-2  \\
\cE_{\nu}[\rho(X,Y)], &\text{if } j=|\sX|+|\sY|-1 \\
H(\nu_x), &\text{if } j=|\sX|+|\sY| \\
H(\nu_y), &\text{if } j=|\sX|+|\sY|+1,
\end{cases}
\nonumber
\end{align}
where $\nu=\nu_x\otimes\nu_y$  and $H(P)$ denotes the entropy of the distribution $P$. By so-called `support lemma' \cite[Appendix C]{GaKi11}, there exists a random variable $U \sim \Gamma_U$, taking values in $\sU$ with $ |\sU| \leq |\sX|+|\sY|+1$, and a conditional probability $\Gamma_{X|U}\Gamma_{Y|U}$ on $\sX\times\sY$ given $\sU$ such that for $j=1,\ldots,|\sX|+|\sY|+1$,
\begin{align}
&\sum_w g_j(\Lambda_{X|W=w}\Lambda_{Y|W=w}) \Lambda_W(w) \nonumber \\
&\phantom{xxxxxxxxxxxxxxxxx}= \sum_u g_j(\Gamma_{X|U=u}\Gamma_{Y|U=u}) \Gamma_U(u), \nonumber
\end{align}
which completes the proof.

\subsection{Converse}\label{sec conv}

We use the standard approach to prove the converse in Theorem~\ref{thm2}, i.e., that $\cl \cR(D) \subset \L(D)$ for any $D\geq0$. We note that this proof holds both for finite alphabets and continuous alphabets.

For each $R_c$, define the minimum coding rate $R$ at distortion level $D$ as
\[
\min \{R \in \cR(D,R_c)\} =: I_{R_c}(\mu\|\psi,D). \nonumber
\]
Using a time-sharing argument and the operational meaning of $I_{R_c}(\mu\|\psi,D)$, one can prove that
$I_{R_c}(\mu\|\psi,D)$ is convex in $D$, and therefore, continuous in $D$, $0< D <\infty$ (see the proof of Lemma~\ref{lemma1}).
Since $I_{R_c}(\mu\|\psi,D)$ is nonincreasing in $D$, we have $I_{R_c}(\mu\|\psi,0)\geq\lim_{D\rightarrow0}I_{R_c}(\mu\|\psi,D)$.
But by the definition of $\cR(0,R_c)$, we also have $\lim_{D\rightarrow0}I_{R_c}(\mu\|\psi,D) \in \cR(0,R_c)$, so that $I_{R_c}(\mu\|\psi,0)=\lim_{D\rightarrow0}I_{R_c}(\mu\|\psi,D)$. Hence, $I_{R_c}(\mu\|\psi,D)$ is also continuous at $D=0$.
Let us define $\cR^*(D) = \{(R,R_c) \in \R^2: R > I_{R_c}(\mu\|\psi,D)\}$ and let $(R,R_c) \in \cR^*(D)$. Since $I_{R_c}(\mu\|\psi,D)$ is continuous in $D$, there exists $\varepsilon>0$ such that $R>I_{R_c}(\mu\|\psi,D-\varepsilon)$. Hence, there exists, for all sufficiently large $n$, a $(n,R,R_c)$ randomized source code such that
\begin{align}
\cE[\rho_n(X^n,Y^n)] \leq D, \nonumber \\
Y^n \sim \psi^n. \nonumber
\end{align}
For each $n$, define the random variable $Q_n \sim \sUnif\{1,\ldots,n\}$ which is independent of $(X^n,Y^n,J,K)$, associated with the $n^{th}$ randomized source code. Since $J \in [2^{nR}]$,
\begin{align}
nR \geq H(J) \geq H(J|K) &\geq I(X^n;J|K)  \nonumber \\
&\overset{(a)}{=} I(X^n;J,K) \nonumber \\
&= \sum_{i=1}^n I(X_i;J,K|X^{i-1}) \nonumber \\
&\overset{(b)}{=} \sum_{i=1}^n I(X_i;J,K,X^{i-1}) \nonumber \\
&\geq \sum_{i=1}^n I(X_i;J,K) \nonumber \\
&= nI(X_{Q_n};J,K|Q_n) \nonumber \\
&\overset{(c)}{=} nI(X_{Q_n};J,K,Q_n), \nonumber
\end{align}
where $(a)$ follows from the independence of $X^n$ and $K$, $(b)$ follows from i.i.d. nature of the source $X^n$ and $(c)$ follows from the independence of $X_{Q_n}$ and $Q_n$. Similarly, for the sum rate we have
\begin{align}
n(R+R_c) \geq H(J,K) &\geq I(Y^n;J,K)  \nonumber \\
&= \sum_{i=1}^n I(Y_i;J,K|Y^{i-1}) \nonumber \\
&\overset{(a)}{=} \sum_{i=1}^n I(Y_i;J,K,Y^{i-1}) \nonumber \\
&\geq \sum_{i=1}^n I(Y_i;J,K) \nonumber \\
&= nI(Y_{Q_n};J,K|Q_n) \nonumber \\
&\overset{(b)}{=} nI(Y_{Q_n};J,K,Q_n), \nonumber
\end{align}
where $(a)$ follows from i.i.d. nature of the output $Y^n$ and $(b)$ follows from the independence of $Y_{Q_n}$ and $Q_n$. Notice that $X_{Q_n} \sim \mu$, $Y_{Q_n} \sim \psi$, and $X_{Q_n} - (J,K,Q_n) - Y_{Q_n}$. We also have
\begin{align}
\cE[\rho(X_{Q_n},Y_{Q_n})] &= \cE\biggl[\cE\bigl[\rho(X_{Q_n},Y_{Q_n})|Q_n\bigr]\biggr] \nonumber \\
&= \frac{1}{n} \sum_{i=1}^n \cE\bigl[\rho(X_{Q_n},Y_{Q_n})|Q_n=i\bigr] \nonumber \\
&= \frac{1}{n} \sum_{i=1}^n \cE\bigl[\rho(X_i,Y_i)\bigr] \nonumber \\
&= \cE\bigl[\rho_n(X^n,Y^n)\bigr] \leq D. \nonumber
\end{align}
Define $U = (J,K,Q_n)$ and denote by $P_{X,Y,U}$ the distribution of $(X_{Q_n},Y_{Q_n},U)$. Hence, $P_{X,Y,U} \in \M(D)$ which implies that $(R,R_c) \in \L(D)$. Hence, $\cR^*(D) \subset \L(D)$. But, since $\L(D)$ is closed in $\R^2$, we also have $\cl \cR^*(D) = \cl \cR(D) \subset \L(D)$.

\section{Conclusion}
\label{conc}

Generalizing the practically motivated distribution preserving quantization
problem, we have derived the rate distortion region for randomized source coding
of a stationary and memoryless source, where the output of the code is
restricted to be also stationary and memoryless with some specified
distribution. For a given distortion level, the rate region consists of coding
and common randomness rate pairs, where the common randomness is independent of
the source and shared between the encoder and the decoder. Unlike in classical
rate distortion theory, here shared independent randomness can decrease the
necessary coding rate communicated between the encoder and decoder.

\section*{Appendix}

\appsec

\subsection{Proof of Lemma~\ref{lemma1}}\label{sec6sub1}

Let $D_1$ and $D_2$ be two distinct positive real numbers and choose $\alpha \in (0,1)$. Fix any $\varepsilon > 0$. Let $\delta$ be a small positive number which will be specified later. By the definition of $I_0(\mu\|\psi,D)$ and by Theorem~\ref{thm2} there exist positive real numbers $R_1$ and $R_2$ such that
\begin{align}
R_i &\leq I_0(\mu\|\psi,D_i) + \delta, i=1,2, \nonumber
\end{align}
and such that for all sufficiently large $n$ there exist randomized $(n,R_1,0)$ and $(n,R_2,0)$ source codes having output distribution $\psi^n$ which satisfy
 \begin{align}
\cE\biggl[\rho_n\biggl(X^n,F^{(1)}\bigl(E^{(1)}(X^n)\bigr)\biggr)\biggr] \leq D_1 + \delta, i=1,2, \nonumber
\end{align}
where $(E^{(1)},F^{(1)})$ and $(E^{(2)},F^{(2)})$ are the encoder-decoder pairs for these codes. Let $\{k_M\}_{M\geq1}$ be a sequence of positive integers such that $\lim_{M\rightarrow\infty} \frac{k_M}{M} = \alpha$. Let $N$ be a positive integer which will be specified later. For the source block $X^{nN}$ define the following randomized source code:
\begin{align}
E &\coloneqq \bigl(\underbrace{E^{(1)},\ldots, E^{(1)}}_{\text{$k_N$-times}},\underbrace{E^{(2)},\ldots, E^{(2)}}_{\text{$N - k_N$-times}}\bigr), \nonumber \\
F &\coloneqq \bigl(\underbrace{F^{(1)},\ldots, F^{(1)}}_{\text{$k_N$-times}},\underbrace{F^{(2)},\ldots, F^{(2)}}_{\text{$N - k_N$-times}}\bigr) \nonumber.
\end{align}
Note that the output distribution for this randomized source code is $\psi^{nN}$, and its rate $R$ and distortion $D$ satisfy the following
\begin{align}
R &= \frac{1}{n N} \bigl( k_N n R_1 + ( N -  k_N ) n R_2\bigr) \nonumber \\
&\leq \frac{k_N}{N} I_0(\mu\|\psi,D_1) + \frac{N-k_N}{N} I_0(\mu\|\psi,D_2) + \delta, \nonumber\\
\intertext{and}
D &= \cE\bigl[\rho_{n N}(X^{n N},Y^{n N})\bigr] \leq \frac{k_N}{N} D_1 + \frac{N-k_N}{N} D_2 + \delta. \nonumber
\end{align}
Since $\lim_{M\rightarrow\infty} \frac{k_M}{M} = \alpha$, one can choose $N$ and $\delta$ such that $R$ is upper bounded by $\alpha I_0(\mu\|\psi,D_1) + (1-\alpha) I_0(\mu\|\psi,D_2) + \varepsilon$ and $D$ is upper bounded by $\alpha D_1 + (1-\alpha) D_2+\varepsilon$. By Definition~\ref{def1}, this yields
\begin{align}
&I_0\bigl(\mu\|\psi,\alpha D_1 + (1-\alpha) D_2\bigr) \nonumber \\
&\phantom{xxxxxxxx}\leq \alpha I_0(\mu\|\psi,D_1) + (1-\alpha) I_0(\mu \|\psi,D_2) + \varepsilon. \nonumber
\end{align}
Since $\varepsilon$ is arbitrary, this completes the proof.

\subsection{Proof of Corollary~\ref{cor1}}\label{sec6sub2}

Assume that $(R,R_c)$ is in the interior of $\S(D)$. Then there exists $P_{X,Y,U} \in \H(D)$ such that $R > I(X;U)$ and $R+R_c > I(X,Y;U)$. Let $\pi = P_{X,Y}$. By Theorem~\ref{thm5} there exists a sequence of $(n,R,R_c)$ randomized source codes such that
\begin{align}
\lim_{n\rightarrow\infty} \|P_{X^n,Y^n}-\pi^n\| = 0, \label{aeq16}
\end{align}
where $(X^n,Y^n)$ denotes the input-output of the $n^{th}$ code.
Since $\rho_n$ is bounded, we have
\begin{align}
&\limsup_{n\rightarrow\infty} \bigl| \cE[\rho_n(X^n,Y^n)] - D \bigr| \nonumber \\
&\phantom{xxxx}= \limsup_{n\rightarrow\infty} \bigl| \cE[\rho_n(X^n,Y^n)] - \cE_{\pi^n}[\rho_n(X^n,Y^n)] \bigr|\nonumber \\
&\phantom{xxxx}\leq \limsup_{n\rightarrow\infty}   \|P_{X^n,Y^n} - \pi^n\|_{TV} \rho_{\max} = 0 , \label{eq44}
\end{align}
where $\cE_{\pi^n}$ denotes the expectation with respect to $\pi^n$.
Let $T_{\hat{Y}^n|Y^n}$ be the optimal coupling (i.e., conditional probability) between $P_{Y^n}$ and $\psi^n$ with the transportation cost $\hat{T}(P_{Y^n},\psi^n)$ with cost function $\rho_n$. By \cite[Theorem 6.15]{Vil09} and (\ref{aeq16}) one can prove that $\limsup_{n\rightarrow\infty}\hat{T}(P_{Y^n},\psi^n)=0$ as in (\ref{aeq13}).

For each $n$, let us define the following encoder-decoder pair (see Fig.~\ref{fig3})
\begin{align}
\tilde{E}^n_{J|X^n,K} &\coloneqq E^{n}_{J|X^n,K} \label{eq46} \\
\tilde{F}^n_{\hat{Y}^n|J,K} &\coloneqq T_{\hat{Y}^n|Y^n}\circ F^{n}_{Y^n|J,K} \label{eq47},
\end{align}
where $(E^n,F^n)$ is the encoder-decoder pair of the $n^{th}$ code.
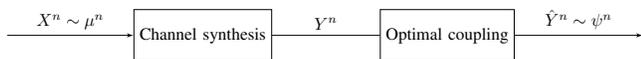
\begin{figure}[h]
\centering
\tikzstyle{int}=[draw, fill=white!20, minimum size=3em]
\tikzstyle{init} = [pin edge={to-,thin,black}]
\scalebox{0.65}{
\begin{tikzpicture}[node distance=5cm,auto,>=latex']
    \node [int] (a) {Channel synthesis};
    \node (b) [left of=a,node distance=4cm, coordinate] {a};
    \node [int] (c) [right of=a] {Optimal coupling};
    \node [coordinate] (end) [right of=c, node distance=4cm]{};
    \path[->] (b) edge node {$X^n\sim\mu^n$} (a);
    \path[-] (a) edge node {$Y^n$} (c);
    \draw[->] (c) edge node {$\hat{Y}^n\sim\psi^{n}$} (end) ;
\end{tikzpicture}}
\caption{Sub-optimal randomized source code achieving the rate
    region of  Corollary~\ref{cor1}.}
\label{fig3}
\end{figure}
Note that the randomized source code defined in (\ref{eq46}) and (\ref{eq47}) has rates $(R,R_c)$ and output distribution $\psi^n$. Furthermore, using the triangle inequality as in Section~\ref{sec achiev} one can prove that
\begin{align}
\limsup_{n\to \infty}  \cE[\rho_n(X^{n},\hat{Y}^{n})] \le D \nonumber
\end{align}
using (\ref{eq44}) and the fact that $\limsup_{n\rightarrow\infty}\hat{T}(P_{Y^n},\psi^n)=0$. This completes the proof.

\subsection{Proof of Theorem~\ref{thm3}}\label{sec6sub2a}

Since $\cR_e(D,R_c) \supset \cR_e(D,0)$ for all $R_c$, it is enough to prove that
\begin{align}
\cR_e(D,0) &\supset \L(D,\infty), \nonumber \\
\cR_e(D,R_c) &\subset \L(D,\infty). \nonumber
\end{align}
Recall that
\begin{align}
\L(D,\infty) = \{R \in \R: \exists P_{X,Y} \in \G(D) \text{  s.t.  } R\geq I(X;Y)\}. \nonumber
\end{align}
Let us assume that $R \in \L(D,\infty)$. Then, there exists $P_{X,Y} =: \pi \in \G(D)$ such that $R \geq I(X;Y)$. Fix any $\varepsilon > 0$. By Theorem~\ref{thm6} there exists a sequence of $(n,R,\infty)$ randomized source codes such that
\begin{align}
\lim_{n \rightarrow \infty} \|p_{X^n,Y^n} - \pi\|_{TV} &= 0 \text{ in probability}, \label{eq54}
\intertext{which implies}
\lim_{n \rightarrow \infty} \|p_{Y^n} - \psi\|_{TV} &= 0 \text{ in probability}. \nonumber
\end{align}
Hence, this sequence of codes satisfies the second constraint in Definition~\ref{def3}. To show that the codes satisfy the distortion constraint, we use the same steps in \cite[Section VII-D]{Cuf10}. We have
\begin{align}
\rho_n(X^n,Y^n) &= \frac{1}{n} \sum_{i=1}^n \rho(X_i,Y_i) \nonumber \\
&= \frac{1}{n} \sum_{i=1}^n \sum_{x,y} 1_{\{X_i=x, Y_i=y\}} \rho(x,y) \nonumber \\
&= \sum_{x,y} \rho(x,y) \frac{1}{n} \sum_{i=1}^n 1_{\{X_i=x, Y_i=y\}} \nonumber \\
&= \cE_{p_{X^n,Y^n}}[\rho(X,Y)], \nonumber
\end{align}
where $1_B$ denotes the indicator of event $B$ and $\cE_{p_{X^n,Y^n}}$ denotes the expectation with respect to the empirical distribution $p_{X^n,Y^n}$ of $(X^n,Y^n)$. For any $\varepsilon_1 > 0$, by (\ref{eq54}) we have
\begin{align}
\sPr\biggl\{\|p_{X^n,Y^n} - \pi\|_{TV} > \varepsilon_1\biggr\} < \varepsilon_1,\nonumber
\end{align}
for all sufficiently large $n$. Define the event $B_{\varepsilon_1} \coloneqq \bigl\{\|p_{X^n,Y^n}-\pi\|_{TV} \leq \varepsilon_1\bigr\}$. Then, for all sufficiently large $n$, we obtain
\begin{align}
&\cE[\rho_n(X^n,Y^n)] \nonumber \\
&\phantom{x}= \cE \biggl[ \cE_{p_{X^n,Y^n}}\bigl[\rho(X,Y)\bigr] \biggr]\nonumber \\
&\phantom{x}= \cE \biggl[ \cE_{p_{X^n,Y^n}}\bigl[\rho(X,Y)\bigr] 1_{B_{\varepsilon_1}} \biggr]+\cE \biggl[ \cE_{p_{X^n,Y^n}}\bigl[\rho(X,Y)\bigr] 1_{B_{\varepsilon_1}^c } \biggr] \nonumber \\
&\phantom{x}\leq \cE \biggl[ \cE_{p_{X^n,Y^n}}\bigl[\rho(X,Y)\bigr] 1_{B_{\varepsilon_1}} \biggr]+\rho_{\max} \varepsilon_1 \nonumber \\
&\phantom{x}\leq \cE_{\pi}\bigl[\rho(X,Y)\bigr] + 2\varepsilon_1\rho_{\max} \nonumber \\
&\phantom{x}\leq D + 2\varepsilon_1\rho_{\max}. \nonumber
\end{align}
By choosing $\varepsilon_1$ such that $2\varepsilon_1\rho_{\max} < \varepsilon$, we obtain $\cR_e(D,0) \supset \L(D,\infty)$.

To prove $\cR_e(D,R_c) \subset \L(D,\infty)$, we use the same arguments as in \cite[Section VII-B]{Cuf10}. Let us choose $R \in \cR_e(D,R_c)$ with the corresponding sequence of $(n,R,R_c)$ randomized source codes satisfying constraints in Definition~\ref{def3}. For each $n$, define the random variable $Q_n \sim \sUnif\{1,\ldots,n\}$ which is independent of the input-output $(X^n,Y^n)$ of the code $(n,R,R_c)$. Then, we have
\begin{align}
nR &\geq H(J) \nonumber \\
&\geq I(X^n;Y^n) \nonumber \\
&= \sum_{i=1}^n I(X_i;Y^n|X^{i-1}) \nonumber \\
&= \sum_{i=1}^n I(X_i;Y^n,X^{i-1}) \nonumber \\
&\geq \sum_{i=1}^n I(X_i;Y_i) \nonumber \\
&= n I(X_{Q_n};Y_{Q_n}|Q_n) \nonumber \\
&\overset{(a)}{=} n I(X_{Q_n};Y_{Q_n},Q_n) \nonumber \\
&\geq n I(X_{Q_n};Y_{Q_n}), \label{eq56}
\end{align}
where $(a)$ follows from the independence of $X_{Q_n}$ and $Q_n$. We also have
\begin{align}
\cE[\rho(X_{Q_n},Y_{Q_n})] &= \cE\biggl[\cE\bigl[\rho(X_{Q_n},Y_{Q_n})|Q_n\bigr]\biggr] \nonumber \\
&= \frac{1}{n} \sum_{i=1}^n \cE\bigl[\rho(X_{Q_n},Y_{Q_n})|Q_n=i\bigr] \nonumber \\
&= \frac{1}{n} \sum_{i=1}^n \cE\bigl[\rho(X_i,Y_i)\bigr] \nonumber \\
&= \cE\bigl[\rho_n(X^n,Y^n)\bigr]. \label{eq57}
\end{align}
One can prove
$P_{Y_{Q_n}} \rightarrow \psi$  in total variation  (see, e.g., \cite[Section VII-B-3]{Cuf10}). Since the set of probability distributions over $\sX\times\sY$ is compact with respect to the total variation distance, we can find a subsequence $\{(X_{Q_{n_k}},Y_{Q_{n_k}})\}$ of $\{(X_{Q_{n}},Y_{Q_{n}})\}$ such that
\begin{align}
P_{X_{Q_{n_k}},Y_{Q_{n_k}}} \rightarrow P_{\hat{X},\hat{Y}} \nonumber
\end{align}
in total variation for some $P_{\hat{X},\hat{Y}}$. But, since $P_{X_{Q_{n_k}}}=\mu$ for all $k$ and $P_{Y_{Q_{n}}} \rightarrow \psi$ in total variation, we must have $P_{\hat{X}}=\mu$ and $P_{\hat{Y}}=\psi$. Now, taking the limit of (\ref{eq56}) and (\ref{eq57}) through this subsequence, we obtain
\begin{align}
R &\geq \lim_{k\rightarrow\infty} I(X_{Q_{n_k}};Y_{Q_{n_k}}) = I(\hat{X};\hat{Y}) \nonumber \\
\intertext{and}
\cE[\rho(\hat{X},\hat{Y})] &= \lim_{k\rightarrow\infty} \cE[\rho(X_{Q_{n_k}},Y_{Q_{n_k}})] \nonumber\\
&= \lim_{k\rightarrow\infty} \cE[\rho_{n_k}(X^{n_k},Y^{n_k})] \leq D. \nonumber
\end{align}
Hence, $R \in \L(D,\infty)$ which completes the proof.

\subsection{Proof of Theorem~\ref{thm4}} \label{sec6sub3}

\paragraph{Achievability}

Assume $(R,R_c)$ is in the interior of $\cl R_{dd}(D)$. Then there exists
$P_{X,Y} =: \pi \in \G(D)$ such that $R > I(X;Y)$ and $R+R_c > H(Y)$. By
\cite[Theorem 1]{BeDeHaShWi13} or \cite[Section III-E]{Cuf13}, there exists a
sequence of $(n,R,R_c)$ randomized source codes with deterministic decoder such
that
\begin{align}
\|P_{X^n,Y^n} - \pi^n\|_{TV} \rightarrow 0. \nonumber
\end{align}
Hence, $\|P_{Y^n} - \psi^n\|_{TV} \rightarrow 0$ and
\begin{align}
\lim_{n\rightarrow\infty} \cE[\rho_n(X^n,Y^n)] = \lim_{n\rightarrow\infty} \cE_{\pi^n}[\rho_n(X^n,Y^n)] \leq D \nonumber
\end{align}
completing the proof.

\paragraph{Converse}

Let $(R,R_c) \in \cl \mathcal{R}_{dd}(D)$. Using a similar argument as in Appendix~\ref{sec6sub2a}, one can show that
\begin{align}
nR &\geq n I(X_{Q_n};Y_{Q_n}), \label{eq70}
\intertext{and}
\cE[\rho(X_{Q_n},Y_{Q_n})] &= \cE[\rho_n(X^n,Y^n)], \label{eq71}
\end{align}
where $Q_n \sim \sUnif\{1,\ldots,n\}$ is independent of input-output $(X^n,Y^n)$ of the corresponding randomized source code, and $P_{Y_{Q_n}} \rightarrow \psi$ in total variation. Also, there is a subsequence $\{(X_{Q_{n_k}},Y_{Q_{n_k}})\}$ such that
$P_{X_{Q_{n_k}},Y_{Q_{n_k}}} \rightarrow P_{\hat{X},\hat{Y}}$ in total variation
for some $P_{\hat{X},\hat{Y}}$ with $P_{\hat{X}}=\mu$ and $P_{\hat{Y}}=\psi$. By taking the limit of (\ref{eq70}) and (\ref{eq71}) through this subsequence we obtain
\begin{align}
R &\geq I(\hat{X};\hat{Y}), \label{eq60} \\
\cE[\rho(\hat{X},\hat{Y})] &\leq D. \label{eq72}
\end{align}
Hence, the first inequality in (\ref{eq21}) is satisfied.  To show the second inequality, let $a_n := \|P_{Y^n} - \psi^n\|_{TV}$. By \cite[Theorem 17.3.3]{CoTh06}, we have
\begin{align}
|H(Y^n) - H(\psi^n)| \leq a_n \log\biggl(\frac{|\sY|^n}{a_n}\biggr), \nonumber
\end{align}
where $H(\psi^n)= n H(\psi)$. Since the decoder is a deterministic function of
$J$ and $K$, we have
\begin{align}
n H(\psi) - a_n \bigl( n \log |\sY|- \log a_n  \bigr) &\leq H(Y^n) \leq n(R+R_c). \nonumber
\end{align}
Since $a_n \rightarrow 0$ as $n\rightarrow\infty$, this yields $R+R_c \geq H(\psi) = H(Y)$.

\section*{Acknowledgement} The authors would like to thank two anonymous
reviewers for many constructive comments.

\bibliographystyle{IEEEtran}

\end{document}